\journal{Journal of Multivariate Analysis}
\theoremstyle{plain}%
\newtheorem{thm}{Theorem}
\newtheorem*{othersthm}{Theorem}
\newtheorem{prop}{Proposition}
\newtheorem{lem}{Lemma}
\newtheorem{cor}{Corollary}
\theoremstyle{definition}
\newtheorem{defn}{Definition}
\newcommand{\SFS}[1]{{#1}}
\newcommand{\Real}[1]{\mathbb{R}^{#1}}
\newcommand{\Variance}[1]{\mathbb{V}(#1)}
\newcommand{\diagonalmatrix}[1]{\text{diag}(#1)}
\def\squarematrixsymbol{{\scalebox{0.5}{$\,\square$}}}
\newcommand{\yv}{{y}}
\newcommand{\dimy}{m}
\newcommand{\nfactrue}{r}
\newcommand{\nfac}{k}
\newcommand{\Sigmaeps}{\Sigma_0}
\newcommand{\Sigmak}[1]{\Sigma_{#1}}
\newcommand{\Sy}{S_y}
\newcommand{\Msp}{M}
\newcommand{\sparsespace}{\mathcal{B}}%
\newcommand{\sparsespacerem}{{\sparsespace^\text{rem}}} %
\newcommand{\binarymatrix}{{\delta}}
\newcommand{\facload}[1]{\beta_{#1}}
\newcommand{\facloadmatrix}{{\beta}}
\newcommand{\facloadtrue}{{\Lambda}}
\newcommand{\squarefacloadmatrix}{\facloadmatrix^\squarematrixsymbol}
\newcommand{\squarebinarymatrix}{\binarymatrix^\squarematrixsymbol}
\newcommand{\binarymatrixrem}{{\binarymatrix}^\text{rem}}
\newcommand{\facloadmatrixrem}{{\facloadmatrix}^\text{rem}}
\newcommand{\vergraph}{B}
\newcommand{\vercolvertices}{V_\text{col}}
\newcommand{\verrowvertices}{V_\text{row}}
\newcommand{\veredges}{E_B}
\newcommand{\rowdeletion}[2]{\text{RD}({#1},{#2})}
\newcommand{\countingrule}[2]{\text{CR}({#1},{#2})}
\newcommand{\RDr}{\rowdeletion{r}{1}}
\newcommand{\RDrs}{\rowdeletion{r}{s}}
\newcommand{\RDro}{\rowdeletion{r}{0}}
\newcommand{\CRr}{\countingrule{r}{1}}
\newcommand{\CRrs}{\countingrule{r}{s}}
\newcommand{\CRro}{\countingrule{r}{0}}
\newcommand{\STARTNEW}{\textcolor{red}{\textbf{Start NEW}}}
\newcommand{\ENDNEW}{\textcolor{red}{\textbf{End NEW}}}
\newcommand{\mycomment}[1]{}
\def\stepsize{3mm}
\colorlet{covercolor}{green!80!black}
\tikzset{nonzero/.style={draw, circle, color=white, fill=blue!90!black, inner sep=0.5mm}}
\tikzset{leading/.style={draw, isosceles triangle, isosceles triangle apex angle=60, color=white, fill=red!90!black, inner sep=0.5mm, rotate=-30}}
\tikzset{axis label/.style={}}
\tikzset{matrix box/.style={ultra thin, color=gray}}
\tikzset{venn diagram/.style={draw, rounded rectangle, inner sep=0pt}}
\tikzset{edge/.style={}}
\begin{document}

\begin{frontmatter}

  \title{Cover It Up! Bipartite Graphs Uncover Identifiability in Sparse Factor Analysis}

  \author[1]{Darjus Hosszejni}
  \author[2]{Sylvia Fr\"uhwirth-Schnatter}

  \address[1]{Department of Finance, Accounting, and Statistics, WU Vienna University of Economics and Business, Austria}
  \address[2]{Department of Finance, Accounting, and Statistics, WU Vienna University of Economics and Business, Austria}

  \cortext[mycorrespondingauthor]{Corresponding author. Email address: \url{darjus.hosszejni@wu.ac.at}}

\begin{abstract}
    \SFS{Factor models are an indispensable tool in dimension reduction in multivariate statistical analysis. Despite their popularity,} little attention has been given to formally address identifiability of these models beyond standard rotation-based identification. To fill this gap, \SFS{the present paper focuses
   on uniquely identifying the variance decomposition in the factor representation without imposing any constraints or structure on the loading matrix. We rely on a counting rule for the zero-nonzero pattern of the loading matrix and prove sufficiency of this condition for achieving  variance identification. The proof is based on connecting factor analysis with some classical elements from graph and network theory.}
    Furthermore, we provide a computationally efficient tool for verifying the counting rule.
    Our methodology is illustrated for simulated as well as real data in the context of post-processing posterior draws in sparse  Bayesian factor analysis.
  \end{abstract}

  \begin{keyword} %
    Computational complexity \sep
    Factor analysis \sep
    Shrinkage prior \sep
    Sparsity \sep
    Variance identification
    \MSC[2020] Primary 62H25 \sep
    Secondary 62P20 \sep 90C35
  \end{keyword}

\end{frontmatter}

\section{Introduction\label{sec:intro}}

A popular technique of dimension reduction in multivariate analysis is principal component analysis (PCA) which relies on the single value decomposition (SVD) of the sample covariance matrix
$\Sy$ of realizations  of an $\dimy$-dimensional random variable $Y$, see e.g. \cite{and:int}. More specifically,
only the eigenvectors
$U_1$ corresponding to the $\nfactrue$ largest eigenvalues $D_1$ are kept, while the variation explained by the eigenvectors
$U_2$ corresponding to the remaining eigenvalues $D_2$ is ignored, i.e.
\begin{equation} \label{SVD}
    \Sy= U_1 D_1 U_1^ \top + U_2 D_2 U_2^ \top
    \approx  \facloadtrue \facloadtrue  ^\top
\end{equation}
where $\facloadtrue= U_1 D_1^{1/2}$ is a $\dimy \times \nfactrue$ matrix, typically with  $\nfactrue \ll \dimy$. In its original form, PCA is purely a data reduction technique without much insight into the data generating process.

A statistical modelling framework derived from PCA is  probabilistic PCA  (\cite{tip-bis:pro}) which adds random noise to account for the unexplained variance due to dropping the term $U_2 D_2 U_2^ \top$ in \eqref{SVD}. Assuming w.o.l.g. that $Y$ is centered, %
$Y \sim N(0,\Omega)$ is assumed to arise from a zero-mean Gaussian distribution with covariance matrix $\Omega= \facloadtrue \facloadtrue  ^\top + \sigma^2 I_{\dimy}$. In this model, the covariance between the components $Y_i$ and $Y_\ell$  of $Y$ is
explained by the inner product of row  $i$ and $\ell$ of $\facloadtrue$, while a single parameter,  $\sigma^2$, is present to control the fraction of unexplained variance, $\sigma^2/\Omega_{ii}$, for all components of $Y$.  Considering for illustration a random variable $Y$ which is not only centered, but also standardized (i.e. $\Omega_{ii}=1$), it becomes apparent that probabilistic PCA relies on the rather strict assumption that the fraction of unexplained variance is the same for all components of $Y$.

More flexibility in this regard is obtained by the multi-factor model introduced by \cite{thu:vec} which found numerous applications in applied multivariate analysis and will be the focus of the present paper. The model introduces an idiosyncratic variance $\sigma^2 _i$ to account for the unexplained variance of each components $Y_i$ of $Y$ and decomposes the covariance matrix $\Omega$ as
\begin{equation} \label{Omegatrue}
    \Omega=\facloadtrue \facloadtrue ^\top + \Sigmaeps,
\end{equation}
where $\facloadtrue$ is the $\dimy \times \nfactrue$ factor loading matrix,  $\Sigmaeps=\diagonalmatrix{\sigma_1^2,\ldots,\sigma_m^2}$ is diagonal and  $\nfactrue$ is the so-called factor dimension.
As discussed by the comprehensive textbooks of \cite{gor:fac} and \cite{and:int}, the multi-factor model is interesting both from a mathematical and a statistical perspective.

Starting with the pioneering work of  \cite{rei:ide} and \cite{and-rub:sta}, the mathematical analysis centers around the question of identifiability of the parameter $\facloadtrue$  and $ \Sigmaeps$ for a given $\Omega$, both in situations where the assumed factor dimension is equal to or different from the true factor dimension $\nfactrue$, see e.g. \cite{sha:ide} and \cite{bek-ten:gen}. Many mathematical conditions have been proposed  to address the various types of unidentifiability inherent in any factor model, see \cite{Fruehwirth2023When} for a recent review.

One such condition is variance identification which ensures that the decomposition of $\Omega$ in (\ref{Omegatrue}) is unique in the following sense. For any pair $(\facloadmatrix, \Sigmak{\nfactrue})$,
where $\facloadmatrix$ is an $\dimy \times \nfactrue$ factor loading matrix  and $\Sigmak{\nfactrue}$ is a  diagonal matrix such that
$\Omega=\facloadmatrix \facloadmatrix  ^\top + \Sigmak{\nfactrue}$,
it follows that  $\Sigmak{\nfactrue}=\Sigmaeps$ and hence
the cross-covariance matrices $\facloadmatrix \facloadmatrix  ^\top =
\facloadtrue \facloadtrue ^\top$ are identical. In this case,  the underlying loading matrix can be identified up to rotational invariance (\cite{and-rub:sta}),  i.e. $\facloadmatrix = \facloadtrue P$  where $P$ is a permutation matrix. \cite{and-rub:sta} provide following sufficient condition for variance identification, also known as row deletion property: after deleting any row from $\facloadtrue$, the remaining matrix contains two disjoint submatrices of rank $\nfactrue$.
In the present paper, we contribute to the mathematical aspects of multi-factor analysis by proving a sufficient condition for the row-deletion property based on the zero-nonzero pattern of the factor loading matrix and show how it can be verified in practice through an efficient algorithm.

Variance identification  becomes vital when the factor dimension is unknown.  A typical example where variance identification is violated are so-called spurious factors, where only a single non-zero factor loading is present in the corresponding column of $\facloadtrue$. Such spurious factors emerge in particular, when a factor model of dimension $\nfac> \nfactrue$ is employed to explain the covariance matrix $\Omega$ emerging from model \eqref{Omegatrue}.
\cite{rei:ide} shows that if there is a solution to the decomposition
\eqref{Omegatrue} with  factor
dimension $\nfactrue$, then there exist infinitely many solutions with a larger
factor dimension $\nfac >\nfactrue$.
\cite{tum-sat:ide} give a representation of these solutions, characterized by the $\dimy \times \nfac$ loading matrix $\facloadmatrix$ and the diagonal matrix  $\Sigmak{\nfac}$. They  show  that the true loading matrix $\facloadtrue$ is embedded within $\facloadmatrix$,
    but disguised  by spurious factors and a rotation $P$. For $\nfac= \nfactrue+1$, for instance,
    \begin{eqnarray} \label{overbate}
     \facloadmatrix =
     \left(\begin{array}{cc}
      \facloadtrue &  \Msp
     \end{array} \right) P ,  \quad
     \Sigmak{\nfac} = \Sigmaeps - \Msp \Msp ^\top
    \end{eqnarray}
   where $ \Msp $ is a spurious column with a single non-zero factor loading.
Obviously, the pair $(\facloadmatrix, \Sigmak{\nfac})$  implies the same  covariance matrix $\Omega$ as the true model and yields the same predictive distribution for $Y$ as the pair
 $(\facloadtrue, \Sigmaeps)$. On the other hand,
 model $(\facloadmatrix, \Sigmak{\nfac})$  features a factor loading matrix of dimension  $\nfac> \nfactrue$ and overestimates (inflates) the true factor dimension.
 However, a quick check immediately reveals that
 the $\dimy \times \nfac$ loading matrix  $\facloadmatrix$ violates variance identification, even if $\facloadtrue$ satisfies the conditions for variance identification: once the row corresponding to the single non-zero element in $M$ is removed, the remaining matrix has rank  $\nfactrue $ and does not contain two distinct sub-matrices of rank $\nfac > \nfactrue$.
 Consequently, any pair $(\facloadmatrix, \Sigmak{\nfac})$  that violates variance identification should not be considered a reliable representation for recovering the number of factors, as it very likely
overestimates the factor dimension.

This example clearly indicates that variance identification
is also relevant for statistical factor analysis, in particular  when the  factor dimension is unknown.
 Statistical analysis for factor models centers around fitting a suitable model to  realizations  $y_1, \ldots,y_T$  of $Y$, typically using ML estimation (\cite{and:bay,liu-rub:max,rub-tha:em_alg}) or a Bayesian inference. The Bayesian approach
 combines the likelihood derived from the factor model \eqref{Omegatrue}
 with a prior distribution on $\facloadtrue$ and $\Sigmaeps$ and
 offers several
 attractive features. The use of proper priors on the  idiosyncratic variances  $\sigma^2 _i$, for instance, avoids  Heywood problems  common in ML estimation, where some of the
 estimated $\sigma^2 _i$ are negative, see e.g. \cite{Fruehwirth2024Sparse}.

  ML and Bayesian approaches differ fundamentally when the factor dimension $\nfactrue$ is unknown. To choose $\nfactrue$, ML estimation employs an
incremental approach where a factor model with increasing factor dimension
is refitted to the data and BIC-type model selection criteria are applied to estimate $\nfactrue$, see e.g.
\cite{bai-ng:det2002}.
  In recent years, sparse Bayesian factor analysis became extremely popular in dealing with uncertainty regarding the factor dimension, see among many others \cite{luc-etal:spa,wes:bay_fac,gha-etal:bay,bha-dun:spa,con-etal:bay_exp,roc-geo:fas,kau-sch:bay,zha-etal:bay_gro,Fruehwirth2024Sparse}.
Sparse Bayesian factor analysis recovers the number of factors $\nfactrue$
from the data in a one-sweep algorithm. It
combines an overfitting factor model where the factor dimension is potentially bigger than $\nfactrue$ with a prior on the
 factor loading matrix
that introduces prior column sparsity in $\facloadtrue$.
This allows us to learn the number of factors on the fly and the resulting posterior distribution $p(\nfactrue\mid y_1, \ldots,y_T)$ allows uncertainty quantification with respect to $\nfactrue$.

\begin{figure}[t]
  \centering
  \includegraphics[width=0.4\textwidth,]{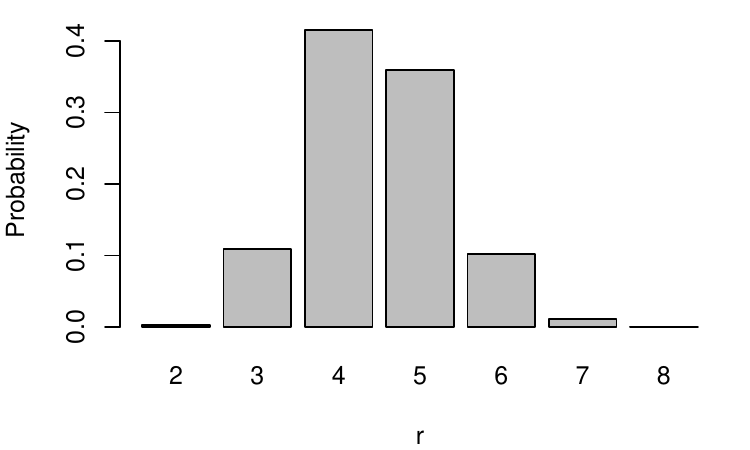}
  \includegraphics[width=0.4\textwidth,]{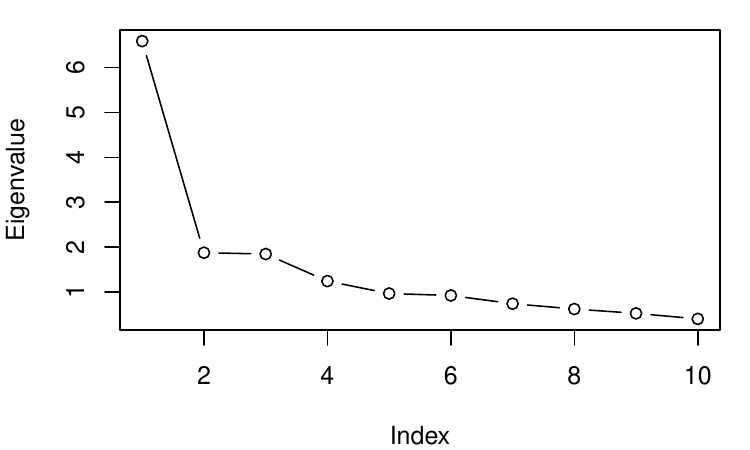}
  \caption{Posterior distribution $p(\nfactrue\mid y_1, \ldots,y_T)$ of the number factors for 52 weekly returns of 17 currencies of big trading partners of the Eurozone between January and December 2005 in comparison to the scree plot (right hand side).}
	\label{FIG1NEW}
\end{figure}

To illustrate one major difference between sparse Bayesian factor analysis and PCA,  Figure~\ref{FIG1NEW}  compares
the posterior distribution $p(\nfactrue\mid y_1, \ldots,y_T)$ of the number factors for the first data set considered in Section~\ref{appEx}, with the scree plot obtained from PCA.
The data are 52 weekly returns of 17 currencies of big trading partners of the Eurozone between January and December 2005.
As common for financial data, a strong market factor is present in the scree plot (shown on the right hand side) in addition to several weaker factors  which explain a small fraction of variance.
The posterior distribution (shown on the left hand side) translates the ambiguity in the scree plot into a posterior  distribution $p(\nfactrue\mid y_1, \ldots,y_T)$ that puts considerable mass on the presence of 4 or 5 factors.

The rest of the paper is organized as follows. Our mathematical results are summarized in Section~\ref{sec:modeltheory}
and applied to sparse Bayesian factor analysis in Section~\ref{sec:illustration}.
 While our motivation comes from sparse Bayesian factor analysis, our mathematical insights
are of a purely structural nature and potentially useful beyond this specific application.
After a brief review of variance identification in Section~\ref{sec:ident}, we provide
a counting rule on the zero-nonzero pattern
of the factor loading matrix
$\facloadtrue$, summarized in the binary matrix $\delta$, to check variance identification. This counting rule
was introduced by \cite{sat:stu} and
 only shown to be a necessary condition which has to be checked for $\facloadtrue$ and all possible rotations $\facloadmatrix =\facloadtrue P$. Recently,
\cite{Fruehwirth2023When} were able to prove that this counting rule is sufficient for variance identification, provided that $\facloadtrue$ exhibits a so-called  generalized lower triangular (GLT) structure. However, their proof heavily relies on assuming a GLT structure and is not easily extended to alternative structures or unconstrained loading matrices.
As a first major contribution, we prove in
Theorem~\ref{thm:suf} in Section~\ref{sec:sufficient} that this counting rule is sufficient for the row deletion property of \cite{and-rub:sta} except for a set of Lebesque measure zero. As opposed to  \cite{Fruehwirth2023When}, Theorem~\ref{thm:suf} does not require any structural constraints and can be applied to constrained and  unconstrained loading matrices alike.
Our proof relies on matching the binary zero-nonzero pattern $\delta$
of the factor loading matrix to a bipartite graph which is a mathematical object that captures the structure of $\delta$ but is invariant to permutations of the rows and columns of $\delta$, similar to the counting rule being invariant to permutations of the rows and columns of $\delta$.

Mathematically, the counting rule is a condition on all non-empty submatrices of
$\delta$ with $q \in \{ 1, \ldots, \nfactrue\}$ columns requiring that this submatrix has at least $2q+1$ non-zero rows. This condition could be checked for all $2^\nfactrue -1 $ submatrices which becomes infeasible for increasing $\nfactrue$. As a second major contribution of this paper, we design in Section~\ref{sec:fast} an efficient algorithm for checking the counting rule and prove  in Theorem~\ref{thm:ver} that this condition can be verified in polynomial time.
This algorithm is available as open source code\footnote{The source code is available at \url{https://hdarjus.github.io/sparvaride/}.} and can be applied regardless whether the loading matrix is constrained as in \cite{Fruehwirth2024Sparse} or unconstrained as in \cite{kau-sch:bay}.

In Section~\ref{sec:illustration}, we return to sparse Bayesian factor analysis. Posterior inference in sparse Bayesian factor analysis is typically performed using simulation techniques such as Markov chain Monte Carlo, see \cite{Fruehwirth2024Sparse} among many others. During sampling from the joint posterior distribution of all unknowns, identifiability conditions are either partially or completely ignored. Estimates of the quantities of interest are obtained by post-processing the posterior draws from such an unidentified model.
In Section~\ref{sec:illustration},
we investigate specifically the impact of conditions ensuring variance identification on recovering both the covariance matrix $\Omega$ (which is essential for prediction) as well as the true number of factors during such a post-processing step. We illustrate for simulated data that using only those posterior draws during post-processing  for which a sufficient condition for variance identification based on the counting rule of \cite{sat:stu} is fulfilled is instrumental in recovering the unknown factor dimension. On the other hand, recovering $\Omega$ and, hence, prediction is fairly robust to ignoring this condition. In addition, we estimate the number of factors in a financial application dynamically using a moving window approach over 100 overlapping periods. Whereas  prediction is again robust to the presence of posterior draws that are not variance identified, including this condition
avoids overfitting solutions that inflate the number of factors and
leads to a sparser number of factors explaining the observed variation of the data.
 We conclude the paper
with discussions included in Section~\ref{sec:conclusion}.

\section{Model and Theoretical Results} \label{sec:modeltheory}

\subsection{Variance Identification in the Basic Factor Model} \label{sec:ident}

Let $\yv=(y_1,\ldots,y_T)$ be a sequence of $\dimy$-dimensional observations, which are \SFS{centered around zero and} assumed to arise from a latent linear factor model with $\nfactrue$ factors,
\begin{equation}\label{eq:model}
  y_t = \facloadmatrix f_t + \epsilon_t,
\end{equation}
where $f_t$ is the $\nfactrue$-vector of latent factors, $\facloadmatrix$ is the $\dimy\times\nfactrue$-dimensional matrix of factor loadings $\facload{ij}$ with full column rank, and $\epsilon_t$ is the $\dimy$-vector of idiosyncratic errors, for $t=1,\ldots,T$.
In the basic factor model, the idiosyncratic errors are assumed to be iid $\dimy$-variate Gaussian \SFS{random variable} $\epsilon_t\sim N_\dimy(0,\Sigmaeps)$, where $\Sigmaeps=\diagonalmatrix{\sigma_1^2,\ldots,\sigma_m^2}$ is diagonal.
Furthermore, the latent factors are iid $\nfactrue$-variate Gaussian \SFS{random variable} $f_t\sim N_\nfactrue(0,I_\nfactrue)$, where $I_\nfactrue$ is the $\nfactrue\times\nfactrue$-dimensional identity matrix, and the factors are independent from the idiosyncratic errors $\epsilon_s$ for all $s=1,\ldots,T$.
When the latent factors are integrated out, this specification gives rise to the matrix decomposition of the covariance matrix $\Variance{y_t}=\Omega=\facloadmatrix\facloadmatrix^\top+\Sigmaeps$.

It is well known that any unitary matrix $G$ can be used to rotate the factor loadings $\facloadmatrix$ into $\facloadmatrix G$ without changing the covariance matrix $\Omega$.
In this case, model~\eqref{eq:model} changes to the observationally equivalent $y_t=(\facloadmatrix G)(G^{-1} f_t)+\epsilon_t$, and therefore $\facloadmatrix$ is not uniquely identified, which is called rotational invariance.
Further restrictions
are \SFS{required}
to achieve unique identification of $\facloadmatrix$,
\SFS{however we work with unconstrained loading matrices}
in this paper.

\SFS{As mentioned previously, we} contribute to the literature on the identification of $\Sigmaeps$, called variance identification.
More precisely, we consider the basic factor model as part of a sparse Bayesian factor analysis (BFA) model, where the factor loading matrix $\facloadmatrix$ follows an unknown zero-nonzero pattern that is estimated from the data along with the other parameters.
\SFS{Sparsity is allowed a priori, but not enforced, because the estimated pattern may be a fully nonzero matrix}.
In such a setting, the identifiability of $\Sigmaeps$ is not known a priori, because it depends on the estimated zero-nonzero pattern of $\facloadmatrix$.
Further details on the sparse BFA model are given in Section~\ref{sec:illustration}.

We emphasize, that the sufficient condition that we employ to achieve variance identification is a condition on $\facloadmatrix$.
In other words, we investigate the properties of $\facloadmatrix$ to say something about the identifiability of $\Sigmaeps$.
\cite{and-rub:sta} provide such a condition, later modified in~\cite{tum-sat:ide} for overfitting factor models, called the extended row deletion property in~\cite{Fruehwirth2023When}.

\begin{defn}[Extended Row Deletion Property, $\RDrs$] %
  The $\dimy\times\nfactrue$-dimensional factor loading matrix $\facloadmatrix$ is said to satisfy the extended row deletion property $\RDrs$ if for any $s>0$ rows of $\facloadmatrix$ that are removed, the remaining rows of $\facloadmatrix$ can be grouped into two matrices of rank $r$.
\end{defn}

\cite{and-rub:sta} show that $\RDr$ is a sufficient condition for the identification of $\Sigmaeps$.
Their result holds for every $\facloadmatrix$ with real-valued entries and thus also for a matrix with some exact zero entries; this makes it relevant for sparse BFA.
A second application of $\RDrs$ is variance identification in overfitting factor models, when series-specific \SFS{(spurious)} factors are allowed, which do not contribute to the off-diagonal elements of $\Omega$, as introduced by~\cite{tum-sat:ide}.
In both~\cite{and-rub:sta} and~\cite{tum-sat:ide}, however, the theory lacks practical ways to verify $\RDrs$.
Later, \cite{sat:stu} develop a new condition for $\facloadmatrix$, based on counting nonzero rows in all rotations of $\facloadmatrix$, which is shown to be necessary for $\RDrs$.
However, since the authors consider infinitely many rotations of $\facloadmatrix$, \SFS{this} %
condition  is %
unverifiable \SFS{in practice} and is not widely applied in the literature to our knowledge.

Recently, \cite{Fruehwirth2023When} directly build on results by~\cite{and-rub:sta},~\cite{tum-sat:ide}, and~\cite{sat:stu}, and introduce a framework in sparse BFA for the joint identification of $\facloadmatrix$ and $\Sigmaeps$.
The framework is based on an identifying assumption\footnote{%
$\facloadmatrix$ is assumed to adhere to the so-called generalized lower triangular structure.
For details, refer to~\cite{Fruehwirth2023When}.} %
on $\facloadmatrix$ combined with the following counting rule, which is in the same spirit as the counting rule of~\cite{sat:stu}.
\begin{defn}[Counting Rule, $\CRrs$]
  The $\dimy\times\nfactrue$-dimensional binary matrix $\binarymatrix$ is said to satisfy the counting rule $\CRrs$ if every submatrix containing $q$ columns of $\binarymatrix$ has at least $2q+s$ nonzero rows for every $1\le q\le r$.
\end{defn}
The authors show, among others, that, in their framework, an $\dimy\times\nfactrue$ binary matrix $\binarymatrix$ satisfying $\CRrs$ is sufficient for almost all $\dimy\times\nfactrue$ factor loading matrices $\facloadmatrix$ to satisfy $\RDrs$, where $\binarymatrix$ is an indicator matrix for $\facloadmatrix\neq0$; that is, $\facload{ij}=0$ if $\binarymatrix_{ij}=0$.
We generalize this result
\SFS{to unconstrained loading matrices} in the next section.

\subsection{Sufficient Condition for Identification} \label{sec:sufficient}

We fix the notation and the terminology for the rest of the section.
Denote by $\facloadmatrix$ an $\dimy\times\nfactrue$-dimensional matrix with real-valued elements $\facload{ij}$, where we pay attention to zeros for their special interpretation in sparse Bayesian factor analysis. %
Additionally, denote by $\binarymatrix$ an $\dimy\times\nfactrue$-dimensional binary matrix of zeros and ones.
We say that $\facloadmatrix$ is generated by $\binarymatrix$ if $\facload{ij}=0$ whenever $\binarymatrix_{ij}=0$, %
and we denote the set of all $\facloadmatrix$ generated by $\binarymatrix$ by $\sparsespace$.
We think of $\sparsespace$ as being equivalent to a continuous probability space on \SFS{$\mathbb{R}^d$, where $d={\sum_{i,j}\binarymatrix_{ij}}$ is equal to the total number of non-zero elements in $\facloadmatrix$.
This definition} is motivated by the slab elements of spike-and-slab priors on $\facloadmatrix$ in Bayesian variable selection (\cite{tadesse_handbook_2021}).
The probability space brings us to an expression that we extensively use below: ``for all $\facloadmatrix$ generated by $\binarymatrix$, except for a set of Lebesgue-measure zero'' is formally understood as ``the set of exceptional $\facloadmatrix$ matrices form a Lebesgue-nullset in $\sparsespace$''.

This section is mainly concerned with \SFS{proving} the sufficiency of $\CRrs$ for $\RDrs$, formally stated in %
Theorem~\ref{thm:suf} \SFS{at the end of this section.
The} proof is presented in multiple pieces through Propositions~\ref{lem:s0}, \ref{lem:red5}, and \ref{lem:cr2}, and surrounding lemmas.
Since both $\CRrs$ and $\RDrs$ imply $\dimy\ge2\nfactrue+s$ for $\dimy\times\nfactrue$ matrices, we assume that $\dimy\ge2\nfactrue+s$ throughout this section.
First,
\SFS{we show in Proposition~\ref{lem:s0} that
the problem can be} reduced to the case of $s=0$.
Later, 
\SFS{based on Proposition~\ref{lem:s0},} we only need to prove that $\CRro$ implies $\RDro$, except for a set of Lebesgue-measure zero.

\begin{prop}\label{lem:s0}
  Let $\binarymatrix$ denote an $\dimy\times\nfactrue$-dimensional binary matrix and $s\ge0$ a nonnegative integer.
  Consider the following statement: if $\binarymatrix$ satisfies the counting rule $\CRrs$, then the row deletion property $\RDrs$ holds for all $\facloadmatrix$ generated by $\binarymatrix$, except for a set of Lebesgue-measure zero.
  If
  that statement holds for $s=0$, then
  it also holds for %
  \SFS{all positive integers $0< s \le
  \dimy - 2\nfactrue$. }
\end{prop}
\begin{proof}[Proof of Proposition~\ref{lem:s0}]
   Let $\binarymatrix$ be given such that it satisfies $\CRrs$, and denote by $\sparsespace$ the set of all factor loading matrices $\facloadmatrix$ generated by $\binarymatrix$.
   We want to show that $\RDrs$ holds for all $\facloadmatrix\in\sparsespace$ except for a set of Lebesgue-measure zero, or, equivalently, that after deleting any $s$ rows from these factor loading matrices, their remaining rows can almost surely be grouped into two matrices of rank $\nfactrue$.
    Let us %
 \SFS{choose an arbitrary set of $s$ rows.}
  After deleting these $s$ rows, $\binarymatrix$ becomes $\binarymatrixrem$,
  \SFS{containing the remaining rows.}
  The resulting set of factor loading matrices generated by $\binarymatrixrem$ is denoted by $\sparsespacerem$, and
  \SFS{the elements of $\sparsespacerem$
  are denoted by $\facloadmatrixrem$.}
  It is easy to see that $\binarymatrixrem$ satisfies $\CRro$.
   \SFS{If the statement holds for $s=0$, then
    $\RDro$ is satisfied for all  $\facloadmatrixrem$ generated by $\binarymatrixrem$, except for a set of Lebesgue-measure zero.}
   Therefore, by assumption, the set $\mathcal{N}$ of factor loading matrices $\facloadmatrixrem$ that do not satisfy $\RDro$ is of Lebesgue-measure zero in $\sparsespacerem$.
  Now, we map back from $\sparsespacerem$ to $\sparsespace$ instead of working with specific matrices $\facloadmatrixrem$ and $\facloadmatrix$, and that is how we avoid taking the union of uncountably many Lebesgue-nullsets.
  Observe that $\sparsespace$ is the product space of $\sparsespacerem$ and the space spanned by the $s$ deleted rows, and the measure on $\sparsespace$ is the product of the Lebesgue-measures on its two said subspaces.
  Therefore, the set of factor loading matrices $\facloadmatrix$ that correspond to $\mathcal{N}$, i.e., that do not satisfy $\RDro$ after deleting the $s$ test rows, is of Lebesgue-measure zero in $\sparsespace$.
  Equivalently, the set of factor loading matrices $\facloadmatrix$, whose rows can be grouped into two matrices of rank $\nfactrue$ after deleting the fixed $s$ test rows, is of Lebesgue-measure one in $\sparsespace$.

  The above argument holds for any $s$ test rows, and there exist only finitely many ways to choose $s$
  \SFS{among the $\dimy$ rows of} %
  $\binarymatrix$.
  The set of factor loading matrices $\facloadmatrix$ generated by $\binarymatrix$ that satisfy $\RDrs$ is the intersection of all the sets that result from different choices of $s$ test rows to delete.
  Finite intersection of sets of Lebesgue-measure one is of Lebesgue-measure one, and the proof is thus complete.
\end{proof}

\SFS{Before we exploit the simplification provided
by Proposition~\ref {lem:s0} below in Propositions~\ref{lem:red5} and %
\ref{lem:cr2}},
we take a \SFS{detour} %
into elementary graph theory to prove our results
in a structured way.
The centerpiece of the \SFS{final} proof of Theorem~\ref{thm:suf} is the classical duality theorem by K\H{o}nig~\cite{kon:gra} and Egerv\'ary~\cite{ege:mat} in graph theory, which is stated later in this section, where we put $\CRro$ and $\RDro$ on the two sides of the duality.
Graph theory also provides us with a convenient representation of the problem through a specific mapping of a binary matrix $\binarymatrix$ to its corresponding bipartite graph,
\SFS{which will be}
introduced in Definition~\ref{def:bin}.
Notably, these bipartite graphs are an equivalent representation of $\binarymatrix$ up to reordering of its rows and columns.
This is a helpful framework, since both $\CRro$ and $\RDro$ are invariant to row and column permutations.
In the following, we think of a nonzero entry $\binarymatrix_{ij}=1$ in $\binarymatrix$ as a line connecting a point $z_i$ that represents row $i$ with another point $c_j$ that represents column $j$.
This construction is formally defined next in order to provide the language for our proofs.
For a more detailed introduction to graph theoretic notions, see chapters 3.1 and 3.2 of \cite{wes:gra}.

\begin{defn}[Bipartite Graph and Bi-adjacency Matrix] \label{def:bin}
  A bipartite graph $B=(\verrowvertices,\vercolvertices,E_B)$ is a triplet of two disjoint sets of points, called vertices, $\verrowvertices$ and $\vercolvertices$, and a set of undirected lines, called edges, $E_B\subseteq\{\{z_i,c_j\}:z_i\in\verrowvertices,c_j\in\vercolvertices\}$.
  Given an $\dimy\times\nfactrue$-dimensional binary matrix $\binarymatrix$, an equivalent representation, up to reordering rows and columns of $\binarymatrix$, is the following bipartite graph.
  Vertices in the sets $\verrowvertices=\{z_1,\ldots,z_\dimy\}$ and $\vercolvertices=\{c_1,\ldots,c_\nfactrue\}$ correspond to the rows and, respectively, columns of $\binarymatrix$.
  An edge $e\in E_B$ is drawn between $z_i\in\verrowvertices$ and $c_j\in\vercolvertices$ if the corresponding matrix element $\binarymatrix_{ij}$ is nonzero; $z_i$ and $c_j$ are called the endpoints of $e$.
  Then, $\binarymatrix$ describes which pairs of vertices of $B$ are adjacent (i.e., connected by an edge) and therefore $\binarymatrix$ is called the bi-adjacency matrix of $B$.
  Turned around, we call $B$ the bipartite graph of $\binarymatrix$.
\end{defn}

Figure~\ref{fig:csm} shows a \SFS{$3 \times 3$} binary matrix $\binarymatrix$ and its bipartite graph $B=(\verrowvertices,\vercolvertices,E_B)$ with $\verrowvertices=\{z_1,z_2,z_3\}$ and $\vercolvertices=\{c_1,c_2,c_3\}$.
The edges are $E_B=\{\{z_1,c_1\},\{z_1,c_2\},\{z_1,c_3\},\{z_2,c_1\},\{z_3,c_1\}\}$.
One way we use graph theory is %
\SFS{displayed in the same figure:}
thick edges correspond to a so-called maximal matching in $B$, formally introduced below in Definition~\ref{def:matching}.
\SFS{Loosely speaking, we are trying to reorganize the rows of $\binarymatrix$ such that the main diagonal exhibits as many ones as possible. }
\SFS{For the  $\binarymatrix$ considered in Figure~\ref{fig:csm},} swapping rows $z_1$ and $z_2$ results in a reorganized $\binarymatrix$ with a diagonal of two ones, shown in bold face in the original $\binarymatrix$, which is the \SFS{maximum number that can be achieved in this case}.
\SFS{However, if element $\binarymatrix_{3,3}$ were also a one, then
 $\binarymatrix$ could be reorganized such that the diagonal was full and contained ones, only. In this case,} the maximum matching in $B$ would be of size three.

Let us zoom out of this particular example: the existence of a
\SFS{submatrix with a full diagonal of ones}  will be the proof of full-rankedness of all factor loading matrices $\facloadmatrix$ generated by $\binarymatrix$, except for a set of Lebesgue-measure zero. \SFS{This is stated in Lemma~\ref{lem:red}.}
\SFS{To generalize Lemma~\ref{lem:red}, we will then introduce the notion of a  matching in bipartite graphs.
It will be shown that graph matching allows us to %
simplify the search for a submatrix  $\squarebinarymatrix$ that satisfies the conditions of Lemma~\ref{lem:red}. Instead of %
rearranging the rows of $\binarymatrix$  till a suitable $\squarebinarymatrix$ is found,  a dual %
optimization task is performed on the bipartite graph $B$ corresponding to $\delta$.}

\begin{figure}[!t]
  \centering
  \begin{tikzpicture}
    \node [matrix] (delta) at (-4.5, 0) {
      \node (delta corner) {}; & \node [inner ysep=2pt, inner xsep=0pt] (c1) {$c_1$}; & \node {$c_2$}; & \node [inner ysep=2pt, inner xsep=0pt] (c3) {$c_3$}; \\
      \node [inner xsep=2pt, inner ysep=0pt] (r1) {$z_1$}; & \node{$1$}; & \node {$\textbf{1}$}; & \node{$1$}; \\
      \node {$z_2$}; & \node {$\textbf{1}$}; & \node {$0$}; & \node {$0$}; \\
      \node [inner xsep=2pt, inner ysep=0pt] (r3) {$z_3$}; & \node {$1$}; & \node {$0$}; & \node {$0$}; \\
    };
    \node [venn diagram, fit=(c1) (c3)] {};
    \node [venn diagram, rotate fit=90, fit=(r1) (r3)] {};

    \node (hiff) [right of=delta, xshift=1cm] {$\qquad\longmapsto\qquad$};
    \begin{scope}[right of=hiff, xshift=-2cm]
      \def\stepsize{0.3cm}
      \begin{scope}[xshift=0*\stepsize, yshift=3*\stepsize, start chain=going right, node distance=\stepsize]
        \node foreach \i in {1,2,3} [on chain, inner ysep=2pt, inner xsep=0pt] (fg\i) {$c_{\i}$};
      \end{scope}
      \begin{scope}[xshift=0*\stepsize, yshift=-3*\stepsize, start chain=going right, node distance=\stepsize]
        \node foreach \i in {1,2,3} [on chain, inner ysep=2pt, inner xsep=0pt] (sg\i) {$z_{\i}$};
      \end{scope}
    \end{scope}
    \node (fg) [venn diagram, fit=(fg1) (fg3)] {};
    \node (sg) [venn diagram, fit=(sg1) (sg3)] {};
    \begin{scope}[line width=0.5mm]
      \draw [edge] (fg1) -- (sg2);
      \draw [edge] (fg2) -- (sg1);
    \end{scope}
    \draw [edge] (fg1) -- (sg1);
    \draw [edge] (fg1) -- (sg3);
    \draw [edge] (fg3) -- (sg1);
  \end{tikzpicture}
  \caption[Example binary matrix and its bipartite graph]{$3\times3$ binary matrix $\binarymatrix$ (left) and its bipartite graph $B=(\verrowvertices,\vercolvertices,E_B)$ (right) with $\verrowvertices=\{z_1,z_2,z_3\}$ and $\vercolvertices=\{c_1,c_2,c_3\}$.
    One matching in $B$ is $\{\{z_1,c_2\},\{z_2,c_1\}\}$, which is displayed in bold face in $\binarymatrix$ and with thick edges in $B$.}
  \label{fig:csm}
\end{figure}

\begin{lem}\label{lem:red}
  Consider an $\nfactrue\times\nfactrue$-dimensional square submatrix $\squarebinarymatrix$ of an $\dimy\times\nfactrue$-dimensional binary matrix $\binarymatrix$.
  Assume that the diagonal of $\squarebinarymatrix$ only has ones and no zeros.
  Then, for all square factor loading submatrices $\squarefacloadmatrix$ generated by $\squarebinarymatrix$, $\squarefacloadmatrix$ is non-singular, except for a set of Lebesgue-measure zero.
\end{lem}
\begin{proof}[Proof of Lemma~\ref{lem:red}]
  The proof is based on the Leibniz formula for determinants; the well-known non-recursive formula that constructs the determinant of a square matrix as the sum of all products of $\nfactrue$ elements of the matrix, each taken from a different row and column and each multiplied by a sign that depends on the permutation of the columns.
  For the proof, we omit the square notation and use $\facloadmatrix$ and $\binarymatrix$ instead of $\squarefacloadmatrix$ and, resp., $\squarebinarymatrix$.

  The diagonal of $\facloadmatrix$ is a non-degenerate set of $\nfactrue$ continuous variables over $\Real{\nfactrue}$, all
  \SFS{of which} are almost surely nonzero by the definition of being generated by $\binarymatrix$, and their product constitutes one summand in the Leibniz formula (potentially after a sign switch).
  There may be further nonzero summands, but only finitely many, and they all include off-diagonal %
  \SFS{elements of} $\facloadmatrix$.
  Therefore, the sum is nonzero with probability one.
  Hence, $\det(\facloadmatrix)\neq0$ and $\facloadmatrix$ is non-singular with probability one.
\end{proof}

A matching in a bipartite graph implicitly encodes both a subset of rows and their permutation in the bi-adjacency matrix $\binarymatrix$
\SFS{in such a way}
that the permuted rows form a submatrix in $\binarymatrix$ with a full diagonal of ones.
The key is the unique correspondence between row indices and column indices that take part in the matching.
A small example %
\SFS{has been provided}
in Figure~\ref{fig:csm}, and larger examples are the thick edges in the two graphs in Figure~\ref{fig:corr}.
In the latter, for instance, by reordering vertices $z_1,\ldots,z_7$ such that thick edges do not cross, we obtain a full diagonal of ones in the reordered $\binarymatrix$.

\begin{defn}[Matching] \label{def:matching}
  A matching in a bipartite graph $B=(\verrowvertices,\vercolvertices,E_B)$ is a set of pairwise disconnected edges $M\subseteq E_B$ such that no two edges of $M$ have the same endpoints.
  A maximum matching in $B$ is a matching with the maximal number of edges among all matchings in $B$.\footnote{%
    Observe that there may exist matchings in $B$ that cannot be extended to larger matchings but are not maximum matchings, which complicates the search for a maximum.
    In Figure~\ref{fig:csm}, there are many matchings: e.g., $\emptyset$, $\{\{z_1,c_2\},\{z_2,c_1\}\}$, and $\{\{z_1,c_1\}\}$; the latter is not maximal but cannot be extended further.
    Edge set $\{\{z_1,c_1\},\{z_1,c_2\}\}$ is not a matching because $z_1$ appears twice as endpoint.
  }
  A $\vercolvertices$-saturating matching is a matching in $B$ that covers all vertices of $\vercolvertices$, i.e., it contains all vertices in $\vercolvertices$ as endpoints.
\end{defn}

The following statement generalizes Lemma~\ref{lem:red}
\SFS{where $\squarebinarymatrix$ was supposed to exhibit a
full diagonal of ones.}
In Lemma~\ref{lem:red3}, we implicitly find $\squarebinarymatrix$
even if the \SFS{rows of $\delta$ are not ordered such that an
approriate submatrix with a full diagonal of ones exists. We %
show that we can find $\squarebinarymatrix$  with the help of a $\vercolvertices$-saturating matching in the bipartite graph  $B$ of $\delta$}.

\begin{lem}\label{lem:red3}
  Consider an $\dimy\times\nfactrue$-dimensional binary matrix $\binarymatrix$ and its bipartite graph $B=(\verrowvertices,\vercolvertices,E_B)$.
  Assume that there is a $\vercolvertices$-saturating matching in $B$.
  Then, for all factor loading matrices $\facloadmatrix$ generated by $\binarymatrix$, there \SFS{exists} %
  a non-singular $\nfactrue\times\nfactrue$-dimensional submatrix $\squarefacloadmatrix$ in $\facloadmatrix$, except for a set of Lebesgue-measure zero.
\end{lem}

\begin{proof}[Proof of Lemma~\ref{lem:red3}]
  The $\vercolvertices$-saturating matching $M$ in $B$ consists of $\nfactrue$ edges $\{\{z_{i_1},c_1\},\ldots,\{z_{i_\nfactrue},c_\nfactrue\}\}$ with indices $1\le i_1,\ldots,i_\nfactrue\le\dimy$ and $i_j\neq i_l$ for $j\neq l$, which implies $\dimy\ge\nfactrue$.
  Then, the rows of $\binarymatrix$ can be reordered
  \SFS{such that the first $\nfactrue$ rows of $\binarymatrix$ are equal to}
 the $\nfactrue$ rows $z_{i_1},\ldots,z_{i_\nfactrue}$.
  Denote by \SFS{$\tilde{\binarymatrix}$} %
  the reordered $\binarymatrix$ and by $\squarebinarymatrix$ the top $\nfactrue$ rows of \SFS{$\tilde{\binarymatrix}$}. %
  The diagonal of $\squarebinarymatrix$ directly corresponds to the matching $M$ and therefore contains only ones.
  Due to Lemma~\ref{lem:red}, the top $\nfactrue$ rows of any factor loading matrix
  \SFS{$\tilde{\facloadmatrix}$} %
  generated by
  \SFS{$\tilde{\binarymatrix}$} %
  constitute a non-singular matrix, except for a set of Lebesgue-measure zero.
  Now, we can revert \SFS{$\tilde{\facloadmatrix}$} %
  to the original ordering and obtain a factor loading matrix $\facloadmatrix$ generated by the original binary matrix $\binarymatrix$.
  Since the rank is invariant to reordering of rows and $z_{i_1},\ldots,z_{i_\nfactrue}$ are also rows of $\facloadmatrix$, $\facloadmatrix$ almost surely contains a non-singular $\nfactrue\times\nfactrue$-dimensional submatrix.
\end{proof}

In the proof, we use a correspondence between a full-column-rank submatrix in $\facloadmatrix$, generated by $\binarymatrix$, and a $\vercolvertices$-saturating matching in the bipartite graph $B$ of $\binarymatrix$.
\SFS{For illustration, consider the matrix} $\binarymatrix$ and its bipartite graph $B=(\verrowvertices,\vercolvertices,E_B)$ in \SFS{the top of} Figure~\ref{fig:corr}, where $\verrowvertices=\{z_1,\ldots,z_7\}$, $\vercolvertices=\{c_1,c_2,c_3\}$, and $E_B$ is the set of edges connecting $\vercolvertices$ and $\verrowvertices$.
The $\vercolvertices$-saturating matching $\{\{z_1,c_2\},\{z_2,c_3\},\{z_4,c_1\}\}$ is shown using thick edges in $B$, and the same positions in $\binarymatrix$ are typed in bold face. %
These elements form a full diagonal of three ones in $\binarymatrix$ after reordering its rows beginning with $z_4$, $z_1$, and $z_2$.

\begin{figure}[!t]
  \centering
  \begin{tikzpicture}
    \node [matrix] (delta) at (-4.5, 0) {
      \node (delta corner) {}; & \node [inner xsep=0pt] (c1) {$c_1$}; & \node {$c_2$}; & \node [inner xsep=0pt] (c3) {$c_3$}; & \node {\phantom{$c_1^{*}$}}; & \node {\phantom{$c_2^{*}$}}; & \node [inner sep=0pt] (cg3) {\phantom{$c_3^{*}$}}; \\
      \node [inner ysep=0pt] (r1) {$z_1$}; & \node {1}; & \node {\textbf{1}}; & \node {0}; \\
      \node {$z_2$}; & \node {1}; & \node {0}; & \node {\textbf{1}}; \\
      \node {$z_3$}; & \node {0}; & \node {0}; & \node {1}; \\
      \node {$z_4$}; & \node {\textbf{1}}; & \node {0}; & \node {0}; \\
      \node {$z_5$}; & \node {1}; & \node {1}; & \node {0}; \\
      \node {$z_6$}; & \node {1}; & \node {0}; & \node {0}; \\
      \node [inner ysep=0pt] (r7) {$z_7$}; & \node (c1r7) {0}; & \node {0}; & \node (c3r7) {1}; \\
    };
    \node [venn diagram, fit=(c1) (c3)] {};
    \node [venn diagram, rotate fit=90, fit=(r1) (r7)] {};

    \node (hiff) [right of=delta, xshift=2.2cm] {$\qquad\longmapsto\qquad$};

    \begin{scope}[right of=hiff, xshift=-0.8cm]
      \def\stepsize{0.3cm}
      \begin{scope}[xshift=1.5*\stepsize, yshift=4*\stepsize, start chain=going right, node distance=\stepsize]
        \node foreach \i in {1,2} [on chain] (fg\i) {$c_{\i}$};
        \node [on chain,inner xsep=0.2em,inner ysep=0.2em] (fg3) {$c_{3}$};
        \node foreach \i in {1,2,3} [on chain] (fgg\i) {\phantom{$c_{\i}^{*}$}};
      \end{scope}
      \begin{scope}[xshift=0.5*\stepsize, yshift=-4*\stepsize, start chain=going right, node distance=\stepsize]
        \node foreach \i in {1,2,3,4,5,6,7} [on chain] (sg\i) {$z_{\i}$};
      \end{scope}
    \end{scope}

    \node [venn diagram, fit=(fg1) (fg3)] {};
    \node [venn diagram, fit=(sg1) (sg7)] {};

    \begin{scope}[line width=0.5mm]
      \draw [edge] (fg2) -- (sg1);
      \draw [edge] (fg3) -- (sg2);
      \draw [edge] (fg1) -- (sg4);
    \end{scope}

    \draw [edge] (fg3) -- (sg3);
    \draw [edge] (fg2) -- (sg5);
    \draw [edge] (fg1) -- (sg6);

    \draw [edge] (fg1) -- (sg1);
    \draw [edge] (fg1) -- (sg2);
    \draw [edge] (fg3) -- (sg3);
    \draw [edge] (fg1) -- (sg5);
    \draw [edge] (fg3) -- (sg7);

    \begin{scope}[decoration={brace,amplitude=10pt,mirror,raise=4pt}]
      \draw [decorate] (c1r7.south west) -- (c3r7.south east) node [black,midway,yshift=-0.7cm] {$\binarymatrix$};
      \draw [decorate] (sg1.south west) -- (sg7.south east) node [black,midway,yshift=-0.7cm] {$B$};
    \end{scope}

    \begin{scope}[yshift=-5cm]
    \node [matrix] (delta) at (-4.5, 0) {
      \node (delta corner) {}; & \node [inner xsep=0pt] (c1) {$c_1$}; & \node {$c_2$}; & \node {$c_3$}; & \node {$c_1^{*}$}; & \node {$c_2^{*}$}; & \node [inner sep=0pt] (cg3) {$c_3^{*}$}; \\
      \node [inner ysep=0pt] (r1) {$z_1$}; & \node {1}; & \node {\textbf{1}}; & \node {0}; & \node {1}; & \node {1}; & \node {0}; \\
      \node {$z_2$}; & \node {1}; & \node {0}; & \node {\textbf{1}}; & \node {1}; & \node {0}; & \node {1}; \\
      \node {$z_3$}; & \node {0}; & \node {0}; & \node {1}; & \node {0}; & \node {0}; & \node {\textbf{1}}; \\
      \node {$z_4$}; & \node {\textbf{1}}; & \node {0}; & \node {0}; & \node {1}; & \node {0}; & \node {0}; \\
      \node {$z_5$}; & \node {1}; & \node {1}; & \node {0}; & \node {1}; & \node {\textbf{1}}; & \node {0}; \\
      \node {$z_6$}; & \node {1}; & \node {0}; & \node {0}; & \node {\textbf{1}}; & \node {0}; & \node {0}; \\
      \node [inner ysep=0pt] (r7) {$z_7$}; & \node (c1r7) {0}; & \node {0}; & \node (c3r7) {1}; & \node (c1gr7) {0}; & \node {0}; & \node (c3gr7) {1}; \\
    };
    \node [venn diagram, fit=(c1) (cg3)] {};
    \node [venn diagram, rotate fit=90, fit=(r1) (r7)] {};

    \node (hiff) [right of=delta, xshift=2.2cm] {$\qquad\longmapsto\qquad$};

    \begin{scope}[right of=hiff, xshift=-0.8cm]
      \def\stepsize{0.3cm}
      \begin{scope}[xshift=1.5*\stepsize, yshift=4*\stepsize, start chain=going right, node distance=\stepsize]
        \node foreach \i in {1,2,3} [on chain] (fg\i) {$c_{\i}$};
        \node foreach \i in {1,2} [on chain] (fgg\i) {$c_{\i}^{*}$};
        \node [on chain,inner xsep=0.2em,inner ysep=0pt] (fgg3) {$c_{3}^{*}$};
      \end{scope}
      \begin{scope}[xshift=0.5*\stepsize, yshift=-4*\stepsize, start chain=going right, node distance=\stepsize]
        \node foreach \i in {1,2,3,4,5,6,7} [on chain] (sg\i) {$z_{\i}$};
      \end{scope}
    \end{scope}

    \node [venn diagram, fit=(fg1) (fgg3)] {};
    \node [venn diagram, fit=(sg1) (sg7)] {};

    \begin{scope}[line width=0.5mm]
      \draw [edge] (fg2) -- (sg1);
      \draw [edge] (fg3) -- (sg2);
      \draw [edge] (fg1) -- (sg4);

      \draw [edge] (fgg3) -- (sg3);
      \draw [edge] (fgg2) -- (sg5);
      \draw [edge] (fgg1) -- (sg6);
    \end{scope}

    \draw [edge] (fg3) -- (sg3);
    \draw [edge] (fg2) -- (sg5);
    \draw [edge] (fg1) -- (sg6);

    \draw [edge] (fg1) -- (sg1);
    \draw [edge] (fg1) -- (sg2);
    \draw [edge] (fg3) -- (sg3);
    \draw [edge] (fg1) -- (sg5);
    \draw [edge] (fg3) -- (sg7);

    \draw [edge] (fgg2) -- (sg1);
    \draw [edge] (fgg3) -- (sg2);
    \draw [edge] (fgg1) -- (sg4);

    \draw [edge] (fgg1) -- (sg1);
    \draw [edge] (fgg1) -- (sg2);
    \draw [edge] (fgg3) -- (sg3);
    \draw [edge] (fgg1) -- (sg5);
    \draw [edge] (fgg3) -- (sg7);

    \begin{scope}[decoration={brace,amplitude=10pt,mirror,raise=4pt}]
      \draw [decorate] (c1r7.south west) -- (c3r7.south east) node [black,midway,yshift=-0.7cm] {$\binarymatrix$};
      \draw [decorate] (c1gr7.south west) -- (c3gr7.south east) node [black,midway,yshift=-0.7cm] {copy of $\binarymatrix$};
      \draw [decorate] ([yshift=-0.9cm] c1r7.south west) -- ([yshift=-0.9cm] c3gr7.south east) node [black,midway,yshift=-0.7cm] {$\binarymatrix^{||}$};
      \draw [decorate] (sg1.south west) -- (sg7.south east) node [black,midway,yshift=-0.7cm] {$B^{||}$};
    \end{scope}
    \end{scope}
  \end{tikzpicture}
  \caption{A $7\times3$-dimensional binary matrix $\binarymatrix$ (top left), its bipartite graph $B$ (top right), DB matrix $\binarymatrix^{||}$ (bottom left) and DB graph $B^{||}$ (bottom right).
  Thick edges in $B$ (respectively, $B^{||}$) correspond to a maximal matching in $B$ ($B^{||}$), which is a $\vercolvertices$-saturating matching in $B$ ($\vercolvertices^{||}$-saturating matching in $B^{||}$).}
  \label{fig:corr}
\end{figure}

\SFS{So far, we} have discussed a condition that ensures the almost sure existence of a full-column-rank submatrix in $\facloadmatrix$ generated by $\binarymatrix$.
To satisfy $\RDro$, however, we look for two
\SFS{distinct}
submatrices of $\facloadmatrix$ that are of rank $\nfactrue$, i.e. two disjoint sets of $\nfactrue$ rows $(\facloadmatrix_{i_1,\cdot}, \ldots, \facloadmatrix_{i_{\nfactrue},\cdot})$ and $(\facloadmatrix_{i_{\nfactrue+1},\cdot}, \ldots, \facloadmatrix_{i_{2\nfactrue},\cdot})$ in $\facloadmatrix$ such that the corresponding submatrices $(\facloadmatrix_{i_1,\cdot}^\top \ldots \facloadmatrix_{i_{\nfactrue},\cdot}^\top)^\top$
and $(\facloadmatrix_{i_{\nfactrue+1},\cdot}^\top \ldots \facloadmatrix_{i_{2\nfactrue},\cdot}^\top)^\top$ are both of rank $\nfactrue$.
According to Lemma~\ref{lem:red}, assuming for now that the rows of $\facloadmatrix$ are ordered
\SFS{appropriately}, %
this amounts to finding two disjoint $\nfactrue\times\nfactrue$-dimensional submatrices of $\binarymatrix$ with a diagonal of ones, up to a Lebesgue-nullset.
In Lemma~\ref{lem:red2}, we replace this search task with a simpler one
\SFS{by introducing the notion of {\em duplicated binary (DB) matrices}.} An example \SFS{of a binary matrix and the corresponding DB matrix}
is
\SFS{provided in}
Figure~\ref{fig:corr} for $\dimy=7$ and $\nfactrue=3$.

\begin{defn}[DB Matrix $\binarymatrix^{||}$ of $\binarymatrix$]
  The Duplicated Binary (DB) matrix $\binarymatrix^{||}$ of $\dimy\times\nfactrue$-dimensional binary matrix $\binarymatrix$ is the $\dimy\times(2\nfactrue)$-dimensional binary matrix $(\binarymatrix\ \binarymatrix)$.
\end{defn}

\begin{lem}\label{lem:red2}
  Consider an $\dimy\times\nfactrue$-dimensional binary matrix $\binarymatrix$
  \SFS{with $\dimy\ge2\nfactrue$} and its DB matrix $\binarymatrix^{||}$.
  Then, the following two tasks are equivalent in the sense that the same row-indices $i_1,\ldots,i_\nfactrue$, $i_{\nfactrue+1},\ldots,i_{2\nfactrue}$, if any, solve both:
  \begin{enumerate}[label=(\roman*)]
    \item Find two disjoint $\nfactrue\times\nfactrue$-dimensional submatrices of $\binarymatrix$ with a diagonal of ones.
    \item Find a $(2\nfactrue)\times(2\nfactrue)$-dimensional submatrix of $\binarymatrix^{||}$ with a diagonal of ones.
  \end{enumerate}
\end{lem}
\begin{proof}[Proof of Lemma~\ref{lem:red2}]
  The key observation is that we are not concerned about the off-diagonal elements in this lemma.
  If row indices $i_1,\ldots,i_\nfactrue$, and $i_{\nfactrue+1},\ldots,i_{2\nfactrue}$ solve (i), then matrix
  \begin{equation} \label{eq:bin2}
    \begin{pmatrix}
      \binarymatrix_{i_1,\cdot} & \binarymatrix_{i_1,\cdot} \\
      \vdots & \vdots \\
      \binarymatrix_{i_\nfactrue,\cdot} & \binarymatrix_{i_\nfactrue,\cdot} \\
      \binarymatrix_{i_{\nfactrue+1},\cdot} & \binarymatrix_{i_{\nfactrue+1},\cdot} \\
      \vdots & \vdots \\
      \binarymatrix_{i_{2\nfactrue},\cdot} & \binarymatrix_{i_{2\nfactrue},\cdot}
    \end{pmatrix}
  \end{equation}
  constitutes rows $i_1,\ldots,i_{2\nfactrue}$ of $\binarymatrix^{||}$, which has a diagonal of ones.
  Thus, $i_1,\ldots,i_{2\nfactrue}$ solve (ii).

  If row indices $i_1,\ldots,i_{2\nfactrue}$ solve (ii), then the upper left $\nfactrue\times\nfactrue$-dimensional submatrix of $\binarymatrix^{||}$ is a submatrix of $\binarymatrix$ with a diagonal of ones, as seen in matrix~\eqref{eq:bin2}; the same holds for the lower right $\nfactrue\times\nfactrue$-dimensional submatrix.
  Thus, $i_1,\ldots,i_\nfactrue$, and $i_{\nfactrue+1},\ldots,i_{2\nfactrue}$ solve (i).
\end{proof}

\SFS{To utilize Lemma~\ref{lem:red2} for our final goal,
we resort to graph theory to verify
the existence of a $(2\nfactrue)\times(2\nfactrue)$-dimensional submatrix with a diagonal ones in the DB matrix $\binarymatrix^{||}$ of $\delta$.
To this aim we introduce  the notation of  a duplicated bipartite (DB) graph   of a binary matrix $\delta$ in Definition~\ref{def:DPGraph}.}
\SFS{For illustration, the} DB matrix $\binarymatrix^{||}$ and
\SFS{the corresponding} DB graph, denoted by $B^{||}$,
are shown in Figure~\ref{fig:corr} for a $7\times3$-dimensional binary matrix $\binarymatrix$.
\SFS{Lemma~\ref{lem:red4} then relates the existence of a  $\vercolvertices^{||}$-saturating matching in $B^{||}$ to the existence
of two ``disjoint'' $\vercolvertices$-saturating matchings in $B$.}

\begin{defn}[DB Graph $B^{||}$ of $\binarymatrix$] \label{def:DPGraph}
  Consider an $\dimy\times\nfactrue$-dimensional binary matrix $\binarymatrix$ and its bipartite graph $B=(\verrowvertices,\vercolvertices,E_B)$.
  The Duplicated Bipartite (DB) graph $B^{||}=(\verrowvertices^{||},\vercolvertices^{||},E_{B^{||}})$ of $\binarymatrix$ is the bipartite graph of its DB matrix $\binarymatrix^{||}$.
\end{defn}

\begin{lem}\label{lem:red4}
  Consider an $\dimy\times\nfactrue$-dimensional binary matrix $\binarymatrix$, its bipartite graph $B=(\verrowvertices,\vercolvertices,E_B)$, its DB matrix $\binarymatrix^{||}$, and its DB graph $B^{||}=(\vercolvertices^{||},\verrowvertices^{||},E_{B^{||}})$, where $\dimy\ge2\nfactrue$.
  Then, the following two tasks are equivalent in the sense that the same vertices, if any, of $\verrowvertices$ solve both:
  \begin{enumerate}[label=(\roman*)]
    \item Find two $\vercolvertices$-saturating matchings in $B$ such that the endpoints of the first matching in $\verrowvertices$ are disjoint from those of the second matching.
    \item Find a $\vercolvertices^{||}$-saturating matching in $B^{||}$.
  \end{enumerate}
\end{lem}

An instructive way to think of Lemma~\ref{lem:red4} is it being the same as Lemma~\ref{lem:red2}, but with a permutation $\rho$ applied to the rows of $\binarymatrix$.
Since $\binarymatrix^{||}$ has the same row labels as $\binarymatrix$, $\rho$ can be applied to the rows of $\binarymatrix^{||}$ as well.
This way, $\rho$ maps task (i) in Lemma~\ref{lem:red2} to task (i) in Lemma~\ref{lem:red4} and does the same for task (ii).
The inverse $\rho^{-1}$ maps the tasks back from Lemma~\ref{lem:red4} to Lemma~\ref{lem:red2}, thus closing the loop.
Here, we present a direct mechanical proof.
\begin{proof}[Proof of Lemma~\ref{lem:red4}]
  If we have two disjoint $\vercolvertices$-saturating matchings $M_1=\{\{z_{i_1},c_1\},\ldots,\{z_{i_\nfactrue},c_\nfactrue\}\}$ and $M_2=\{\{z_{j_1},c_1\},\allowbreak\ldots,\{z_{j_\nfactrue},c_\nfactrue\}\}$, then we can relabel $M_2$ to $M_2^*=\{\{z_{i_1},c^{*}_1\},\ldots,\{z_{i_\nfactrue},c^{*}_\nfactrue\}\}$.
  $M_1\cup M_2^{*}$ is a $\vercolvertices^{||}$-saturating matching in $B^{||}$ because every element of $\vercolvertices^{||}=\{c_1,\ldots,c_\nfactrue,c^{*}_1,\ldots,c^{*}_\nfactrue\}$ is covered exactly once, and the other endpoints in $\verrowvertices^{||}$ are covered at most once.

  For the inverse direction, assume that $M=\{\{z_{i_1},c_1\},\ldots,\{z_{i_\nfactrue},c_\nfactrue\},\{z_{i_{\nfactrue+1}},c^{*}_1\},\ldots,\{z_{i_{2\nfactrue}},c^{*}_\nfactrue\}\}$ is a $\vercolvertices^{||}$-saturating matching in $B^{||}$.
  Then, $M_1=\{\{z_{i_1},c_1\},\ldots,\{z_{i_\nfactrue},c_\nfactrue\}\}$ and $M_2=\{\{z_{i_{\nfactrue+1}},c_1\},\ldots,\{z_{i_{2\nfactrue}},c_\nfactrue\}\}$ are two disjoint $\vercolvertices$-saturating matchings in $B$.
\end{proof}

One final statement considers the size of a maximum matching and thus concludes one side of the aforementioned duality theorem by K\H{o}ning and Egerv\'ary, which is formally introduced later with its necessary terminology.
\begin{prop}\label{lem:red5}
  Consider an $\dimy\times\nfactrue$-dimensional binary matrix $\binarymatrix$ and its DB graph $B^{||}$, where $\dimy\ge2\nfactrue$.
  Then, the size of a matching in $B^{||}$ is at most $2\nfactrue$.
  Furthermore, if the size of a maximum matching is \SFS{equal to} $2\nfactrue$, then $\RDro$ holds for all $\facloadmatrix$ generated by $\binarymatrix$, except for a set of Lebesgue-measure zero.
\end{prop}
\begin{proof}[Proof of Proposition~\ref{lem:red5}]
  Matchings in $B^{||}$ only go between $\verrowvertices^{||}$ and $\vercolvertices^{||}$.
  Since, by definition, $\vercolvertices^{||}$
  is of size $2\nfactrue$, all matchings in $B^{||}$ contain at most $2\nfactrue$ edges.

  If there is a matching of size $2\nfactrue$, then it is a $\vercolvertices^{||}$-saturating matching.
  Lemma~\ref{lem:red4} then implies that there are two disjoint $\vercolvertices$-saturating matchings in $B$.
  Finally, applying Lemma~\ref{lem:red3} to these two matchings gives two disjoint $\nfactrue\times\nfactrue$-dimensional invertible submatrices in $\facloadmatrix$, and therefore $\RDro$ holds.
\end{proof}

Now, we place $\CRro$ onto the other side of the duality.
$\CRro$ is a statement about columns of $\binarymatrix$ being connected to sufficient number of its rows via ones.
Below, that notion of sufficiency is translated into the language of bipartite graphs.
We show that $\CRro$ is sufficient for $B^{||}$ to have many edges in a specific sense.
So many that one cannot do better than pick the entire $\vercolvertices^{||}$ as a set of vertices to cover all edges $E_{B^{||}}$.
In the following, we define vertex covers for bipartite graphs and present the duality theorem.

\begin{defn}[Vertex Cover] \label{def:vertexcover}
  A vertex cover in a bipartite graph $B=(\verrowvertices,\vercolvertices,E_B)$ is a set of vertices $C\subseteq\verrowvertices\cup\vercolvertices$ such that every edge in $E_B$ has
  at least one endpoint
  in $C$.
  A minimum vertex cover in $B$ is a vertex cover with the minimal set size among all vertex covers in $B$.\footnote{%
    There may exist vertex covers in $B$ that cannot be reduced to smaller vertex covers but are not minimum vertex covers, which complicates the search for a minimum.
    In Figure~\ref{fig:csm}, $\{c_1,c_2,c_3\}$ is not minimal but cannot be reduced further.
  }
\end{defn}

\begin{othersthm}[K\H{o}nig~\cite{kon:gra} and Egerv\'ary~\cite{ege:mat}]
  The size of a maximum matching is equal to the size of a minimum vertex cover in bipartite graphs.
\end{othersthm}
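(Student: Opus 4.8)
\emph{Proof plan.} The plan is to prove the two inequalities separately: writing $\nu$ for the size of a maximum matching and $\tau$ for the size of a minimum vertex cover of a bipartite graph $B=(V_1,V_2,E)$, one direction, $\nu\le\tau$, is elementary and holds for all graphs, while the converse $\tau\le\nu$ is the combinatorial heart of the statement and is specific to the bipartite case.

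First I would dispose of $\nu\le\tau$. Fix any matching $M$ and any vertex cover $C$. Each edge of $M$ has an endpoint in $C$, and because the edges of $M$ are pairwise vertex-disjoint, distinct edges of $M$ are covered by distinct vertices of $C$; hence $|M|\le|C|$, and taking $M$ maximum and $C$ minimum gives $\nu\le\tau$.

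For the reverse inequality I would fix a \emph{maximum} matching $M$ and build an explicit vertex cover $C$ with $|C|=|M|$; combined with the previous paragraph this forces $|C|=\tau=\nu$. Let $U\subseteq V_1$ be the set of vertices left unsaturated by $M$, let $Z$ be the set of all vertices reachable from $U$ along an $M$-alternating path (first edge outside $M$, next edge in $M$, and so on), and set $C=(V_1\setminus Z)\cup(V_2\cap Z)$. Three facts then need to be checked: (i) $C$ covers every edge; (ii) every vertex of $C$ is saturated by $M$; (iii) no edge of $M$ lies entirely inside $C$. Given (ii) and (iii), sending each vertex of $C$ to its unique incident $M$-edge is a well-defined injection into $M$, so $|C|\le|M|$, which is what remains. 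Fact (ii) follows because $U\subseteq Z$ (so $V_1\setminus Z$ contains only saturated vertices) and because an unsaturated vertex in $V_2\cap Z$ would be the end of an $M$-augmenting path, contradicting maximality of $M$; fact (iii) follows because if $\{u,v\}\in M$ with $u\in V_1\setminus Z$ and $v\in V_2\cap Z$, then the alternating path reaching $v$ (necessarily via a non-matching edge) continues along $\{v,u\}\in M$ and forces $u\in Z$, a contradiction.

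I expect fact (i) to be the main obstacle, since it requires the most careful case analysis: if an edge $\{u,v\}$ with $u\in V_1$, $v\in V_2$ were uncovered, then $u\in Z$ and $v\notin Z$, and one must argue --- by following the alternating path to $u$ and distinguishing whether $\{u,v\}$ is or is not the matching edge at $u$, and whether $u$ itself is unsaturated --- that $\{u,v\}$ can always be appended to an $M$-alternating path, forcing $v\in Z$ and the desired contradiction. A cleaner-looking but not genuinely shorter alternative is to cast $B$ as a unit-capacity $s$--$t$ flow network (source joined to $V_1$, sink joined from $V_2$, bipartite edges oriented $V_1\to V_2$), read maximum matchings off integral maximum flows and minimum vertex covers off minimum cuts, and invoke max-flow/min-cut; this merely pushes the difficulty into that theorem, so the direct alternating-path argument above is the one I would present.
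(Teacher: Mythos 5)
Your proposal is correct, but there is nothing in the paper to compare it against: the K\H{o}nig--Egerv\'ary theorem is quoted there as a classical result, with citations in place of a proof. What you give is the standard alternating-path proof: the trivial inequality $\nu\le\tau$ via the injection from matching edges to distinct cover vertices, and the construction $C=(V_1\setminus Z)\cup(V_2\cap Z)$ from the set $Z$ of vertices alternating-reachable from the $M$-unsaturated part of $V_1$. Your three facts are exactly the right ones, and the case analysis you flag for fact (i) does close: if $\{u,v\}$ is the matching edge at $u\in Z$, then $u\notin U$, so $u$ was entered along its unique matching edge, which is $\{u,v\}$ itself, forcing $v\in Z$; if it is not the matching edge, the path into $u$ (empty or ending in a matching edge) extends by $\{u,v\}$. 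Together with fact (ii), whose appeal to the nonexistence of augmenting paths is where maximality of $M$ is genuinely used, and fact (iii), the injection $C\to M$ is well defined and injective, giving $|C|\le|M|$ and hence $\tau=\nu$. The max-flow/min-cut reformulation you mention as an alternative is, as you say, the same content in different clothing; it is worth noting only that it is the formulation the paper itself exploits later, in Section~\ref{sec:verification}, to compute minimum weighted vertex covers via Dinic's algorithm.
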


\SFS{For illustration, consider  Figure~\ref{fig:csm}.}
On the right hand side, $\{c_1,c_2,c_3\}$ is a vertex cover in $B$ because all edges in $E_B$ touch at least one of these vertices.
There is a smaller vertex cover: the set $\{z_1,c_1\}$.
This vertex cover has size two, which is also the size of a matching in $B$, shown with thick edges.
According to the duality theorem by K\H{o}nig and Egerv\'ary, said matching is therefore a maximum matching and $\{z_1,c_1\}$ a minimum vertex cover.
With the help of the duality theorem and Proposition~\ref{lem:red5}, it only remains to show that \SFS{the DB graph $B^{||}$ of $\delta$} has a minimum vertex cover of size $2\nfactrue$ if $\binarymatrix$ satisfies $\CRro$.
In \SFS{this case,} %
the duality theorem implies the existence of a matching of size $2\nfactrue$ in $B^{||}$ %
and, \SFS{consequently,} %
Proposition~\ref{lem:red5} implies that all factor loading matrices $\facloadmatrix$ generated by $\binarymatrix$ satisfy $\RDro$, except for a set of Lebesgue measure zero.

\SFS{The following %
Lemma~\ref{lem:cr1} resembles a counting rule $\CRro$ for DB matrices
and is instrumental in characterizing vertex covers in $B^{||}$.
Lemma~\ref{lem:cr1}  is employed to prove the Proposition~\ref{lem:cr2}, which is the final piece required for the proof of Theorem~\ref{thm:suf}.}

\begin{lem}\label{lem:cr1}
  Assume that an $\dimy\times\nfactrue$ binary matrix $\binarymatrix$ satisfies $\CRro$.
  Then, any subset of $1\le q\le2\nfactrue$ columns of the DB matrix $\binarymatrix^{||}$ contains at least $q$ nonzero rows.
\end{lem}

\begin{proof}[Proof of Lemma~\ref{lem:cr1}]
  Consider a submatrix $\iota$ of $q$ columns of $\binarymatrix^{||}$.
  Denote by $l\ge q/2$ the number of unique (either original or duplicate) columns in $\iota$.
  More formally, $l$ is the largest number such that $l$ columns of $\iota$ (potentially reordered) are equal to a submatrix of $l$ columns of $\binarymatrix$.
  Then, these $l$ columns contain at least $2l$ nonzero rows due to $\CRro$, and hence $\iota$ contains at least $2l\ge 2q/2=q$ nonzero rows.
\end{proof}

\begin{prop}\label{lem:cr2}
  If an $\dimy\times\nfactrue$ binary matrix $\binarymatrix$ satisfies $\CRro$, then its DB graph $B^{||}$ has a minimum vertex cover of size $2\nfactrue$.
\end{prop}

\begin{proof}[Proof of Proposition~\ref{lem:cr2}]
  To start, note that $\dimy\ge2\nfactrue$, and $C=\vercolvertices$ is a vertex cover in $B^{||}$ with $2\nfactrue$ vertices, so $2\nfactrue$ is an attainable upper bound for the minimum.

  Assume that $\CRro$ holds for $\binarymatrix$, and consider a vertex cover $C$ in its DB graph $B^{||}$.
  $C$ contains in total $k$ vertices: without loss of generality, let us assume that these are $c_1,\ldots,c_{k_1}$, $c^{*}_1,\ldots,c^{*}_{k_2}$, and $z_1,\ldots,z_{k_3}$, where $k=k_1+k_2+k_3$ and $k_1\ge k_2\ge0$ and $k_3\ge0$.

  Now, we consider all edges $\mathcal{E}\subseteq E_{B^{||}}$ in $B^{||}$ that are not covered by column vertices $C\cap\vercolvertices=\{c_1,\ldots,c_{k_1},\allowbreak c^{*}_1,\ldots,\allowbreak c^{*}_{k_2}\}$ but only by row vertices $C\cap\verrowvertices=\{z_1,\ldots,\allowbreak z_{k_3}\}$.
  Columns of $\binarymatrix^{||}$ that correspond to vertices $\vercolvertices\setminus C=\{c_{k_1+1},\ldots,c_\nfactrue,\allowbreak c^{*}_{k_2+1},\ldots,c^{*}_\nfactrue\}$ collect the nonzero entries that correspond to $\mathcal{E}$.
  Denote by $\iota$ these columns of $\binarymatrix^{||}$.
  For completeness, the exact column indices are $k_1+1,\ldots,\nfactrue,\nfactrue+k_2+1,\ldots,2\nfactrue$.
  Then, $\iota$ contains $2\nfactrue-(k_1+k_2)$ columns, and, due to Lemma~\ref{lem:cr1}, $\iota$ contains at least $2\nfactrue-(k_1+k_2)$ nonzero rows.
  But the $k_3$ row vertices $C\cap\verrowvertices$ have to cover all these rows, due to how $C$, $\iota$ and $k_3$ are defined, so $k_3\ge2\nfactrue-(k_1+k_2)$.
  This implies that $k=k_1+k_2+k_3\ge2\nfactrue$, and the proof is complete.
\end{proof}

\begin{thm}\label{thm:suf}
  Let $\binarymatrix$ be a binary matrix of size $\dimy\times\nfactrue$ and $s\ge0$ a nonnegative integer, where $\dimy\ge2\nfactrue+s$.
  Then, the following statements hold:
  \begin{enumerate}[label=(\roman*)]
    \item If $\binarymatrix$ violates the counting rule $\CRrs$, then the row deletion property $\RDrs$ is violated for all $\facloadmatrix$ generated by $\binarymatrix$.
    \item If $\binarymatrix$ satisfies the counting rule $\CRrs$, then the row deletion property $\RDrs$ holds for all $\facloadmatrix$ generated by $\binarymatrix$, except for a set of Lebesgue-measure zero.
  \end{enumerate}
\end{thm}
\begin{proof}[Proof of Theorem~\ref{thm:suf}]
  Necessity of $\CRrs$ is a direct consequence of Sato's theorem Theorem 3.3~\cite{sat:stu}.
  The theorem states that if any rotation $\facloadmatrix G$ of $\facloadmatrix$ by a non-singular $\nfactrue\times\nfactrue$-matrix $G$ violates $\CRrs$, then $\facloadmatrix$ also violates $\RDrs$.
  If $\binarymatrix$ violates $\CRrs$, and $\facloadmatrix$ is generated by $\binarymatrix$, then, by setting $G=I_{\nfactrue}$, we have that $\facloadmatrix$ violates $\RDrs$.

  Propositions~\ref{lem:red5} and~\ref{lem:cr2} together imply that $\RDro$ holds for all $\facloadmatrix$ generated by $\binarymatrix$ if $\binarymatrix$ satisfies $\CRro$, except for a set of Lebesgue measure zero, and Proposition~\ref{lem:s0} generalizes the result to $\RDrs$ and $\CRrs$.
\end{proof}

  We %
  conclude the section with a corollary that is applied in Section~\ref{sec:illustration} to identify variance identified models in sparse Bayesian factor analysis.

\begin{cor}\label{cor:varide}
  If an $\dimy\times\nfactrue$-dimensional binary matrix $\binarymatrix$ satisfies $\CRr$, where $\dimy\ge2\nfactrue+1$, then model~\eqref{eq:model} with any factor loading matrix $\facloadmatrix$ generated by $\binarymatrix$ is variance identified, except for a \SFS{set of} Lebesgue measure zero.
\end{cor}
The link from $\facloadmatrix$ to variance identification is the Anderson-Rubin theorem~\cite{and-rub:sta} stating that $\Sigmaeps$ is identified if $\facloadmatrix$ satisfies $\RDr$.
Note, however, that the Anderson-Rubin theorem and thus $\CRr$ are not necessary for variance identification even in sparse BFA.
Appendix~\ref{sec:example} provides an example of a sparse variance identified model that does not satisfy $\RDr$.

In order to apply Corollary~\ref{cor:varide} in practice, one needs to verify $\CRr$ for a given binary matrix $\binarymatrix$.
We establish the applicability of Corollary~\ref{cor:varide} for large Bayesian sparse factor models in the next section.

\subsection{Verifying Variance Identification}\label{sec:fast}

\def\mwvcmin{M^\star}
\def\mincut{\kappa^\star}

We extend the previous section and describe an efficient algorithm that verifies $\CRr$.
An initial idea might be to visit all the nonempty submatrices of $\binarymatrix$ that consist of $q$ columns for $1\le q\le\nfactrue$ and count the number of nonzero rows.
However, that approach examines $2^\nfactrue-1$ matrices, which is computationally infeasible for large $\nfactrue$: in \SFS{Bayesian inference,} %
where $\binarymatrix$ is sampled from \SFS{the posterior
distribution and %
many binary matrices need} to be checked, this step may \SFS{induce considerable} %
computational cost.
In this section, we develop a representation of the verification task in graph theory that helps us to prove our second main result: a feasible algorithm for the verification of $\CRr$ \SFS{even for large $\dimy$}.
Formally, we show in Theorem~\ref{thm:ver} that $\CRr$ can be verified in a number of steps that is polynomial in $\dimy$ and $\nfactrue$.
In this section, we provide a constructive proof of the theorem, which can be implemented in practice to verify variance identification in sparse BFA based on Corollary~\ref{cor:varide}.

\SFS{At this point, a few remarks concerning zero rows and zero columns in $\binarymatrix$ are in order. Obviously, }
zero columns are not allowed in $\CRr$ matrices.
\SFS{On the other hand, zero rows might be present and} can be removed from $\binarymatrix$ without loss of generality.
In particular, the addition or removal of zero rows does not influence rank conditions %
and $\RDr$ holds for $\facloadmatrix$ if and only if it holds for $\facloadmatrix$ without its zero rows.
Henceforth, we assume that every row and column of $\binarymatrix$ has at least one nonzero element.

Now, we introduce an extended notion of bipartite graphs that allows us to represent the verification of $\CRr$ in a graph-theoretical framework.

\begin{defn}[Weighted Bipartite Graph and Minimum Weighted Vertex Cover (MWVC)] \label{def:weightedgraph}
  A vertex-weighted (henceforth, simply weighted) bipartite graph $B=(\verrowvertices,\vercolvertices,E_B,w_B)$ is a bipartite graph with a weight mapping $w_B:\verrowvertices\cup\vercolvertices\to\mathbb{N}$.
  A minimum weighted vertex cover in $B$ is a vertex cover $C$ with the minimal total weight $\mwvcmin$ among all vertex covers in $B$, where the total weight $w_B(C)$ of the vertex cover is the sum of the weights of the vertices in the cover.
\end{defn}

Now we present Propositions~\ref{lem:blp} and~\ref{lem:alg}, which constitute the two pieces for the proof of Theorem~\ref{thm:ver}
In Proposition~\ref{lem:blp}, we design a weighted bipartite graph $B$ such that the verification of $\CRr$ on $\binarymatrix$ is equivalent to computing the total weight $\mwvcmin$ of the MWVC on $B$.
Finally, in Proposition~\ref{lem:alg}, we show that the MWVC at hand can be solved efficiently via a polynomial algorithm.

Then, the following proposition provides the basis for the polynomial algorithm.
The intuition behind the vertex cover is that the submatrix formed by the rows and columns that are left out is a zero matrix in $\binarymatrix$.

\begin{prop}\label{lem:blp}
  Consider an $\dimy\times\nfactrue$-dimensional binary matrix $\binarymatrix$.
  Let $\vergraph=(\verrowvertices,\vercolvertices,\veredges,w_B)$ be the bipartite graph of $\binarymatrix$ equipped with weights: define $w_B(z_i)=\nfactrue$ for $z_i\in\verrowvertices$ and $w_B(c_j)=2\nfactrue+1$ for $c_j\in\vercolvertices$.
  Then, $\binarymatrix$ satisfies $\CRr$ if and only if the total weight $\mwvcmin$ of the MWVC in $\vergraph$ is at least $\nfactrue(2\nfactrue+1)$.
\end{prop}

\begin{proof}[Proof of Proposition~\ref{lem:blp}]
  First, note that $\mwvcmin\le\nfactrue(2\nfactrue+1)$ always holds.
  Indeed, the set $\vercolvertices$ has total weight $(2\nfactrue+1)\nfactrue + \nfactrue\cdot0$, and it is a vertex cover.
  Now we turn to the statement.

  For the first direction of the proof, assume that $\CRr$ does not hold; i.e., there exists a submatrix $\binarymatrix_q$ made of $1\le q\le\nfactrue$ columns of $\binarymatrix$ with at most $2q$ nonzero rows.
  Then $\mwvcmin<\nfactrue(2\nfactrue+1)$.
  Indeed, vertices corresponding to the $2q$ nonzero rows of $\binarymatrix_q$ and the $r-q$ columns outside of $\binarymatrix_q$ constitute a vertex cover.
  In this case, $\mwvcmin\le(2\nfactrue+1)(\nfactrue-q)+\nfactrue\cdot2q=\nfactrue(2\nfactrue+1)-q<\nfactrue(2\nfactrue+1)$.

  For the opposite direction, assume that $\CRr$ holds. We show that $\mwvcmin$ always takes at least the aforementioned value $\nfactrue(2\nfactrue+1)$.
  Let us take any vertex cover $C$ and denote by $k$ and $l$ the number of columns and, respectively, rows that correspond to vertices included in the vertex cover.
  For $k=\nfactrue$, the total weight is at least
  \SFS{equal to} $\nfactrue(2\nfactrue+1)$.
  Now, consider $0\le k\le\nfactrue-1$.
  There are $\nfactrue-k$ columns in $\binarymatrix$ that correspond to vertices excluded from the vertex cover; the submatrix constructed from these columns contains at least $2(\nfactrue-k)+1$ nonzero rows due to $\CRr$.
  Hence, in order to cover the vertices corresponding to these nonzero rows, we must have $l\ge2(\nfactrue-k)+1$.
  This means that the total weight $w_B(C)$ for this setting evaluates to $(2\nfactrue+1)k+\nfactrue l\ge(2\nfactrue+1)k+\nfactrue(2(\nfactrue-k)+1)=\nfactrue(2\nfactrue+1)+k\ge\nfactrue(2\nfactrue+1)$.
  Since our argument holds for all vertex covers, we have shown that $\mwvcmin\ge\nfactrue(2\nfactrue+1)$.
\end{proof}

In the next statement, we use the Big-$O$ notation to describe the computational complexity of the algorithm.
\begin{prop}\label{lem:alg}
  The MWVC in $\vergraph$ and its total weight $\mwvcmin$ can be computed in $O(P(\nfactrue,\dimy))$ steps, where $P(\nfactrue,\dimy)$ is a polynomial in $\nfactrue$ and $\dimy$.
\end{prop}

See the proof below.
We do not directly work on the weighted bipartite graph to find the MWVC, but we rather reformulate the problem as a minimal network cut problem and refer to known solutions for that problem, such as Dinic's algorithm~\cite[chapter 8]{tar:dat}.
Therefore, in order to present the reformulation, we first introduce some notions from network theory.
Even though a network is also a graph, we deliberately use different terminology for its parts to improve readability.
In particular, we use ``node'' instead of ``vertex'' and ``arrow'' instead of ``edge''.
We denote arrows as tuples by round brackets, e.g., $(c_2,z_1)$, because they are directed and the order of the nodes matters in networks, in contrast to the set-notation of the curly brackets $\{z_1,c_2\}$ used for undirected edges in all graphs in this paper.
\begin{defn}[Network and Cut in a Network]
  A network $N=(V,E,\kappa)$ is a set of nodes $V$ combined with a set of arrows $E\in V\times V$, which are ordered pairs of nodes.
  Furthermore, networks always have two distinguished nodes: the source node $s\in V$ and the sink node $t\in V$, using common notation.\footnote{%
    One can imagine a network as a model for a pipe system.
    The source node is the water source, the sink node is the water drain, and the arrows are pipes connecting the nodes.
    Each pipe has a diameter, which fixes the capacity of the pipe.
    A simplistic cut naturally arises if many pipes get clogged such that the sink node is cut away from the source node; technically, the cut is then the set of nodes that still get water from the source.
    Note, however, that the definition of a cut in a network is more general than this intuitive description.
  }
  Each arrow $(u,v)\in E$ going from node $u\in V$ to node $v\in V$ has capacity $\kappa(u,v)\in\mathbb{N}\cup\infty$.
  A cut $C\subset V$ is a set of nodes such that $s\in C$ and $t\notin C$.
  The capacity $\kappa(C)$ of a cut is the sum of the capacities of the arrows that start in $C$ and end outside of $C$ in $V\setminus C$.
  A minimal cut is a cut whose capacity $\mincut$ is minimal among all cuts in the network.
\end{defn}

\SFS{For illustration, consider Figure~\ref{exnetwork}.}
 The  \SFS{upper part} %
 shows an example of the weighted bipartite graph $B$ for a $8\times3$-dimensional $\binarymatrix$.
The bottom half of Figure~\ref{exnetwork} shows an example of a network $N=(V,E,\kappa)$
created from $B$ as described below in the proof of Proposition~\ref{lem:alg}, where $V=\{s,c_1,c_2,c_3,z_1,z_2,z_3,z_4,\allowbreak z_5,z_6,z_7,z_8,t\}$. There are twenty-four edges in $E$, and capacities are $\kappa(s,c_i)=7$, $\kappa(c_i,z_j)=\infty$ if $\binarymatrix_{j,i}=1$ and otherwise $0$, and $\kappa(z_j,t)=3$.
A cut can be, for example, $C=\{s,c_1,z_1,z_3,z_4,z_5\}$ with capacity $\kappa(C)=2\cdot7+0+4\cdot3=26$.

\begin{proof}[Proof of Proposition~\ref{lem:alg}]
We first construct the network $N$ from $B$.\footnote{This network construction is inspired by lecture notes in \cite{goe:adv}.}
Its nodes $V$ are the source node $s\in V$, one node $z_i^N$ for every vertex $z_i$ of $B$, one node $c_j^N\in V$ for every vertex $c_j$ of $B$, and the sink node $t$.
Henceforth, to reduce clutter, we use the same notation for the nodes of $N$ as for the vertices of $B$ and the rows and columns of $\binarymatrix$: ``row'' $z_i$ is part of $\binarymatrix$, ``vertex'' or ``row vertex'' $z_i$ is part of $B$, and ``node'' or ``row node'' $z_i$ is part of $N$; the same applies to columns, (column) vertices and (column) nodes denoted by $c_j$.
Continuing with the construction of $N$, there are three groups of arrows: for every column node $c_j$, an arrow goes from $s$ to $c_j$ with capacity $2\nfactrue+1$; for every edge $\{z_i,c_j\}$ in $B$, an arrow $(c_j,z_i)$ goes from node $c_j$ to node $z_i$ with infinite capacity; and for every row node $z_i$, an arrow goes from $z_i$ to $t$ with capacity $\nfactrue$.
Figure~\ref{exnetwork} shows an example of the construction with $\dimy=8$ and $\nfactrue=3$ including $\binarymatrix$, $B$, and $N$.

\tikzset{%
  every neuron/.style={
    circle,
    draw,
    minimum size=1cm
  }
}

\def\stepsize{0.3cm}
\begin{figure}[!t]
  \centering
  \begin{tikzpicture}[x=0cm, y=0cm, >=stealth]
    \node [matrix] (delta) at (-10, 0) {
        \node (delta corner) {}; & \node [inner xsep=0pt] (c1) {$c_1$}; & \node {$c_2$}; & \node [inner xsep=0pt] (c3) {$c_3$}; \\
      \node [inner ysep=0pt] (r1) {$z_1$}; & \node {1}; & \node {0}; & \node {0}; \\
      \node {$z_2$}; & \node {0}; & \node {1}; & \node {0}; \\
      \node {$z_3$}; & \node {1}; & \node {1}; & \node {0}; \\
      \node {$z_4$}; & \node {1}; & \node {0}; & \node {1}; \\
      \node {$z_5$}; & \node {1}; & \node {1}; & \node {1}; \\
      \node {$z_6$}; & \node {0}; & \node {0}; & \node {1}; \\
      \node {$z_7$}; & \node {0}; & \node {1}; & \node {1}; \\
      \node [inner ysep=0pt] (r8) {$z_8$}; & \node (c1r8) {0}; & \node {1}; & \node (c3r8) {0}; \\
    };
    \node [venn diagram, fit=(c1) (c3)] {};
    \node [venn diagram, rotate fit=90, fit=(r1) (r8)] {};

    \node (hiff) [right of=delta, xshift=1.3cm] {$\qquad\longmapsto\qquad$};
    \begin{scope}[right of=hiff, xshift=3.5cm]
      \begin{scope}[xshift=1.5*\stepsize, yshift=4*\stepsize, start chain=going right, node distance=\stepsize]
        \node foreach \i in {1,2,3} [on chain] (fg\i) {$c_{\i}$};
      \end{scope}
      \begin{scope}[xshift=1*\stepsize, yshift=-4*\stepsize, start chain=going right, node distance=\stepsize]
        \node foreach \i in {1,2,...,4} [on chain] (sg\i) {$z_{\i}$};
      \end{scope}
      \begin{scope}[yshift=-7*\stepsize, start chain=going right, node distance=\stepsize]
        \node foreach \i in {5,6,...,8} [on chain] (sg\i) {$z_{\i}$};
      \end{scope}
      \node [right of=sg4, xshift=.0cm, yshift=2.35cm] {$w_B=7$};
      \node [right of=sg4, xshift=.8cm, yshift=-0.5cm] {$w_B=3$};
    \end{scope}

    \node (fg) [venn diagram, fit=(fg1) (fg3)] {};
    \node (sg) [venn diagram, fit=(sg1) (sg4) (sg5) (sg8)] {};

    \begin{scope}
      \draw [edge] (fg1) -- (sg1);
      \draw [edge] (fg1) -- (sg3);
      \draw [edge] (fg1) -- (sg4);
      \draw [edge] (fg1) -- (sg5);
      \draw [edge] (fg2) -- (sg2);
      \draw [edge] (fg2) -- (sg3);
      \draw [edge] (fg2) -- (sg5);
      \draw [edge] (fg2) -- (sg7);
      \draw [edge] (fg2) -- (sg8);
      \draw [edge] (fg3) -- (sg4);
      \draw [edge] (fg3) -- (sg5);
      \draw [edge] (fg3) -- (sg6);
      \draw [edge] (fg3) -- (sg7);
    \end{scope}
  \end{tikzpicture}

  \begin{tikzpicture}[x=1.5cm, y=1.5cm, >=stealth]
    \node (iff) at (4.0, 4.2) {$\Big\Updownarrow$};

    \node [every neuron] (source) at (0,0) {$s$};

    \foreach \m [count=\y] in {1,2,3}
    \node [every neuron] (factor-\m) at (2,2-\y) {$c_\y$};

    \foreach \m [count=\y] in {1,2,3,4,5,6,7,8}
    \node [every neuron] (observation-\m) at (4,4.5-\y) {$z_\y$};

    \node [every neuron] (sink) at (6,0) {$t$};

    \node at (1,-1) {$7$};

    \node at (3,-2.5) {$\infty$};

    \node at (5,-2.5) {$3$};

    \foreach \l in {1,2,3}
    \draw [->] (source) -- (factor-\l);

    \foreach \l in {1,2,3,4,5,6,7,8}
    \draw [->] (observation-\l) -- (sink);

    \foreach \l in {1,3,4,5}
    \draw [->] (factor-1) -- (observation-\l);

    \foreach \l in {2,3,5,7,8}
    \draw [->] (factor-2) -- (observation-\l);

    \foreach \l in {4,5,6,7}
    \draw [->] (factor-3) -- (observation-\l);

  \end{tikzpicture}
  \caption{A \SFS{binary matrix} %
  $\binarymatrix$  with $\nfactrue=3$ and $\dimy=8$ (top left), its bipartite graph $\vergraph$ extended with weights \SFS{$w_B$} (top right),  and a network $N$ created from $B$ as described in \SFS{the proof of} Proposition~\ref{lem:alg} (bottom).
  Labels $z_i$ and $c_j$ correspond to the columns and rows of $\binarymatrix$, respectively, $i=1,\ldots,8$, $j=1,2,3$.
  In $\vergraph$, vertices of $\vercolvertices$ and $\verrowvertices$ have weight $7$ and $3$, respectively.
  In $N$, edges between $s$ and $\vercolvertices$, between $\vercolvertices$ and $\verrowvertices$, and between $\verrowvertices$ and $t$ have capacity $7$, $\infty$, and $3$, respectively.}
  \label{exnetwork}
\end{figure}

We prove that the capacity $\mincut$ of the minimal cut in $N$ equals the total weight $\mwvcmin$ of the MWVC in $B$.
Then, we are finished, since $\mincut$ can be computed in $O(P(\nfactrue,\dimy))=O((\dimy+\nfactrue+2)^2(\dimy+\nfactrue+\dimy\nfactrue))$ steps using Dinic's algorithm~\cite{tar:dat}, where $O$ is the Big-$O$ notation.
To see this, observe that $N$ has $\lvert V\rvert=\dimy+\nfactrue+2$ nodes and at most \SFS{$ \lvert E\rvert \le \dimy+\nfactrue+\dimy\nfactrue$} arrows, with the maximum attained only if $\binarymatrix$ is a full binary matrix, and Dinic's algorithm solves the task with a computational complexity of $O(\lvert V\rvert\lvert E\rvert^2)$.

The workhorse of our proof is a bijection between sets of vertices in $B$ and sets of nodes in $N$.
This bijection has special behavior for our construction of $N$, namely, it is also a bijection between vertex covers in $B$ and finite-capacity cuts in $N$. %
\SFS{This allows us to}
efficiently find the MWVC in $B$ by computing the minimal cut in $N$ using Dinic's algorithm.
Denote by $S\subseteq\verrowvertices\cup\vercolvertices$ any set of vertices, which we map to a cut $C$ in network $N$: for every \emph{included} row vertex $z_i\in S\cap\verrowvertices$ (respectively, \emph{excluded} column vertex $c_j\in\vercolvertices\setminus S$), the row node $z_i$ (respectively, column node $c_j$) is included in $C$, and also $s$ is included in $C$, and nothing else.
$C$ is a cut since it includes $s$ and excludes $t$.
Three statements remain: first, $S$ is a vertex cover iff $C$ has finite capacity; second, the weight of $S$ equals the capacity of $C$ if it is finite; and, third, a technicality
\SFS{stating that} $\mwvcmin$ is well-defined in $B$.
Then, we know that the minimum capacity $\mincut$ is equal to the minimum weight $\mwvcmin$.

The first statement has two directions.
For the first direction, assume that $C$ has infinite capacity, which happens if and only if there is an arrow $(c_j,z_i)$ that leaves $C$, i.e., $c_j\in C$ and $z_i\notin C$.
Due to our construction of $C$, \SFS{this} %
implies that there \SFS{exists} %
an edge $\{z_i,c_j\}$ in $B$, but
neither of its endpoints $z_i$ or $c_j$ are in $S$.
Consequently, $S$ is not a vertex cover.
For the other direction, assume that $S$ is not a vertex cover.
Then, there exists an edge $\{z_i,c_j\}$ in $B$ such that $z_i\notin S$ and $c_j\notin S$, in which case $C$ has infinite capacity.
We have shown that $S$ is a vertex cover if and only if $C$ has finite capacity.

For the second statement, we examine the capacity of $C$ when it is finite.
If $z_i\in S$, then the arrow $(z_i,t)$ leaves $C$ and contributes $\nfactrue$ to its capacity, which equals the contribution of $z_i$ to the weight of $S$.
If $c_j\in S$, then the arrow $(s,c_j)$ leaves $C$ and contributes $2\nfactrue+1$ to its capacity, which equals the contribution of $c_j$ to the weight of $S$.
There are no other edges that leave $C$.
We have shown that the capacity of $C$ is equal to the weight of $S$.

For the third and final statement, note that the MWVC is an optimum over a \SFS{specific} domain, and the domain needs to be non-empty for the MWVC to be well-defined.
The domain in question is the set of vertex covers in $B$.
There always exists at least one vertex cover, e.g., all vertices $\verrowvertices\cup\vercolvertices$ are a vertex cover, and therefore the MWVC with total weight $\mwvcmin$ exists as well.
This concludes the proof.
\end{proof}

\begin{thm}\label{thm:ver}
  Property $\CRr$ can be verified algorithmically in $O(P(\nfactrue,\dimy))$ steps, where $O$ is the Big-$O$ notation, and $P(\nfactrue,\dimy)$ is a polynomial in $\nfactrue$ and $\dimy$.
\end{thm}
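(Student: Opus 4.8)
The plan is to reduce the verification of $\CRr$ to a single minimum-cut computation in a polynomially sized network and then invoke a standard polynomial max-flow algorithm. First I would carry out the preprocessing already outlined: zero columns of $\binarymatrix$ are discarded (adjusting $\nfactrue$ accordingly, since the associated factors are spurious) and zero rows are discarded (they do not affect the rank conditions defining $\RDr$). After this reduction, every row and every column of $\binarymatrix$ carries a nonzero entry, and by Lemma~\ref{lemblp} the task is exactly to decide whether the weighted bipartite graph $\vergraph=(\vercolvertices,\verrowvertices,\veredges)$ — with column weights $2\nfactrue+1$ and row weights $\nfactrue$ — has minimum weighted vertex cover of total weight at least $\nfactrue(2\nfactrue+1)$.

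Next I would formalize the correspondence between the MWVC on $\vergraph$ and the minimum cut of the network $N$ constructed above, whose edges are $s\to u$ with capacity $2\nfactrue+1$ for each $u\in\vercolvertices$, $u\to v$ with infinite capacity whenever $\binarymatrix_{v,u}=1$, and $v\to t$ with capacity $\nfactrue$ for each $v\in\verrowvertices$. The key observation is that any cut of finite value contains no infinite-capacity edge, hence is determined by the subset $S$ of source-incident and sink-incident edges it severs; the infinite edges being uncut forces $S$, read as a set of column and row vertices, to be a vertex cover of $\vergraph$, and the value of the cut is precisely the weight of that cover. Conversely, every vertex cover $S\subseteq\vercolvertices\cup\verrowvertices$ induces a finite cut of the same value by the same construction. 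Since $N$ admits at least one finite cut (sever all edges $s\to u$, i.e.\ take $S=\vercolvertices$), the minimum cut is finite and equals $\mwvcmin$.

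Finally I would count resources: $N$ has $\dimy+\nfactrue+2$ nodes and at most $\dimy+\nfactrue+\dimy\nfactrue$ edges, so its size is polynomial in $\dimy$ and $\nfactrue$. By the max-flow min-cut theorem the minimum cut value equals the maximum flow value, which Dinic's algorithm computes in $O((\dimy+\nfactrue+2)^2(\dimy+\nfactrue+\dimy\nfactrue))=P(\nfactrue,\dimy)$ steps; building $N$, running Dinic, and comparing the output to the threshold $\nfactrue(2\nfactrue+1)$ then decides $\CRr$ within the same polynomial bound. The step I expect to demand the most care is the cut–vertex-cover correspondence — in particular, arguing cleanly that minimal cuts avoid the infinite-capacity edges and that the value-preserving bijection holds in both directions, including the boundary cases where $S$ consists of all columns or all rows.
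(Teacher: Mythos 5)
Your proposal is correct and follows essentially the same route as the paper: reduce $\CRr$ to the MWVC threshold test via Lemma~\ref{lemblp}, encode the MWVC as a minimum cut in the network $N$ with the same capacities, use the finite-cut/vertex-cover correspondence, and run Dinic's algorithm within the stated polynomial bound. No gaps to report.
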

\begin{proof}[Proof of Theorem~\ref{thm:ver}]
  The proof follows from Propositions~\ref{lem:blp} and~\ref{lem:alg}.
\end{proof}

In the proof of Proposition~\ref{lem:alg}, we also find that the number of steps increases with $P(\nfactrue,\dimy)=(\dimy+\nfactrue+2)^2(\dimy+\nfactrue+\dimy\nfactrue)$.
For fixed $\nfactrue$, the computational complexity of our method is therefore $O(\dimy^3)$, and, for fixed $\dimy$, which is often the case, it is $O(\nfactrue^3)$ instead of complexity $O(2^\nfactrue)$ for the brute force search through all submatrices.

Although Theorem~\ref{thm:ver} only concerns $\CRr$, the result may also be used to build an algorithm that verifies $\CRrs$ in smaller settings.
Namely, it is easy to see that $\binarymatrix$ satisfies $\CRrs$ if and only if after removing any $s-1$ rows the remaining binary matrix satisfies $\CRr$, which we can verify efficiently.
That realization gives rise to a recursive algorithm with complexity $O(\dimy^{s-1} P(\nfactrue,\dimy))$, which may be practical for $s=2$ or $s=3$ for small $\dimy$.

Finally, note that, in its current form, the proof cannot be extended to a polynomial complexity $P(\nfactrue,\dimy,s)$ algorithm also in $s$ for $\CRrs$ by choosing different weights for $\verrowvertices$ or $\vercolvertices$.
In particular, if $x$ denotes the ratio of vertex weights in $\vercolvertices$ and $\verrowvertices$ (i.e., $x=(2\nfactrue+1)/\nfactrue$ above), then it can be shown that $x>2+s-1$ and $x\le2+s/\nfactrue$ are both necessary for the proof of Proposition~\ref{lem:blp} and thus for Theorem~\ref{thm:ver}.
This interval is non-empty only if $s\le1$.

\SFS{In concluding we note that implementations of this algorithm are available in R and MATLAB at \url{https://hdarjus.github.io/sparvaride/}}.

\section{Numerical Illustration} \label{sec:illustration}

\def\rmax{{H}}

We demonstrate that missing variance identification may unnecessarily inflate the estimated number of factors during exploratory factor analysis (EFA).
We choose the Bayesian paradigm, which allows us to emulate matrix sparsity using a spike-and-slab prior distribution on $\facloadmatrix$ (\SFS{to be} introduced in Section~\ref{sec:model}) and to consider variance identification as a domain restriction on that prior distribution.
Consequently, we can estimate the posterior distribution via a Markov chain Monte Carlo (MCMC) sampler under the unrestricted prior and apply the domain restriction as a post-processing step by discarding the unsatisfactory draws.
The model, its estimation, a simulation study, and a real data study are detailed below.

\subsection{Model and Prior} \label{sec:model}

To facilitate variance identification through $\CRr$, we follow the tradition of~\cite{wes:bay_fac} and introduce indicator variables $\binarymatrix_{ij}\in\{0,1\}$ for every factor loading $\facload{ij}$ as parameters to estimate for $i=1,\ldots,\dimy$, and $j=1,\ldots,\rmax$, and collected in the $\dimy\times\rmax$ matrix $\binarymatrix=(\binarymatrix_{ij})$.
Following established procedures (\cite{con-etal:bay_exp,kau-sch:bay,Fruehwirth2024Sparse}), Bayesian posterior sampling is applied with a conjugate prior on $\facloadmatrix$ and $\Sigmaeps$, combined with column-wise shrinkage on the indicator \SFS{variables $\binarymatrix_{ij}$}.
For completeness, we
\SFS{provide the full hierarchical model specification by combining model~\eqref{eq:model} with a corresponding prior:}
\begin{equation}\label{eq:bayesmodel}  %
	\begin{split}
    y_t &= \facloadmatrix f_t + \epsilon_t, \\
    \epsilon_t &\sim N_\dimy(0,\Sigmaeps), \\
    f_t &\sim N_\nfactrue(0,I), \\
    \\ \\
  \end{split}
  \qquad \qquad \qquad
  \begin{split}
		\facload{ij} \mid \binarymatrix_{ij} = 0 &\equiv 0, \\
		\facload{ij} \mid \binarymatrix_{ij} = 1 &\sim N(0, \sigma_i^2), \\
		\sigma_i^2 &\sim IG(c_0, C_0), \\
		\binarymatrix_{ij} &\sim Ber(\tau_j), \\
		\tau_j &\sim B(a_0, b_0),
	\end{split}
\end{equation}
where $IG(c_0, C_0)$ denotes the inverted gamma distribution with kernel density $x^{-c_0-1}\exp(-C_0/x)\mathbbm{1}(x>0)$, $Ber(\tau_j)$ is the Bernoulli distribution with success probability $\tau_j\in(0,1)$, $B(a_0, b_0)$ is the beta distribution with kernel density $x^{a_0-1}(1-x)^{b_0-1}\mathbbm{1}(0<x<1)$, and $t=1,\ldots,T$.
The choice of $\sigma_i^2$ as the variance lets $\facload{ij}$ capture potential scaling differences between the observation series.
Moreover, two settings are considered below for the prior on $\tau_j$: following~\cite{roc-geo:fas} and~\cite{Fruehwirth2023Cusp}, the finite one-parameter beta (1PB) prior $(a_0,b_0)=(\alpha/\rmax,1)$ is chosen first, which we call shrinkage below, and the uniform prior $(a_0,b_0)=(1,1)$ is picked as an alternative for sensitivity analysis.

A potentially influential question is the choice of $\rmax$.
One solution is the use of infinite factor models, initiated by~\cite{gha-etal:bay} and popularized by~\cite{bha-dun:spa} and~\cite{leg-etal:bay}, where one theoretically lets $\rmax$ diverge to $\infty$ while cumulatively shrinking the columns a priori
\SFS{towards zero} as the column index increases.
Here, we assume\footnote{%
  Necessarily, $2\rmax+1\le\dimy=\dimy$, where $\dimy$ is the number of observation series.
  That is essential for variance identification via $\RDr$, and therefore also via $\CRr$.}
$\rmax=\min(30,\lfloor(\dimy-1)/2\rfloor)$ to both allow for parameter identification via Corollary~\ref{cor:varide} and keep Monte Carlo simulations manageable.
Notably, recently,~\cite{Fruehwirth2023Cusp} showed that our column-wise exchangeable prior $p(\binarymatrix\mid\rmax)$ in Equation~\eqref{eq:bayesmodel} is strongly related to both the framework of~\cite{gha-etal:bay}
and%
~\cite{leg-etal:bay}.

\subsection{Estimation}

Model~\eqref{eq:bayesmodel} specifies a sparse Bayesian factor model with a spike-and-slab prior on $\facloadmatrix$.
The prior $p(\facloadmatrix\mid\rmax)$ is exchangeable both row-wise and column-wise, and the elements of $\{\sigma_i^2\}$ are independent a priori, which results in an order-independent model for the observation series.
Furthermore, the choice of standard conjugate priors for $(\facloadmatrix,\{\sigma^2_i\})$ and $\{\tau_j\}$ enables simple Gibbs sampling.
See Appendix~\ref{app:mcmc} for the steps of the MCMC algorithm.

Throughout the demonstration, we compare three domain restrictions, which we implement via post-processing of the MCMC output.
Under the unrestricted scenario, variance identification as a step is ignored, and the entire output of the MCMC procedure is retained.
In the second scenario, the necessary condition for variance identification of~\cite{and-rub:sta} is applied as a post-processing step, similar to~\cite{kau-sch:bay}.
Namely, if in all columns of $\facloadmatrix$, at least three nonzero elements are present, then the MCMC draw is retained, and, otherwise, it is excluded from summaries of the posterior distribution.
In the third scenario, the sufficient condition $\CRr$ from Corollary~\ref{cor:varide} is enforced during post-processing by only keeping the MCMC draws that satisfy the condition.
In both cases, the MCMC output is filtered before proceeding further: before any
\SFS{subsequent} analysis, we discard the joint draws of \SFS{$(\facloadmatrix,\Sigmaeps,f_1, \ldots, f_T)$} when $\facloadmatrix$ does not satisfy the necessary or, respectively, the sufficient condition.

Further steps during post-processing are estimating \SFS{the number of factors} $\nfactrue$ and the covariance matrix $\Omega=\facloadmatrix\facloadmatrix^\top+\Sigmaeps$ from the filtered or unfiltered MCMC output, depending on the scenario above.
Following~\cite{Fruehwirth2023Cusp} and~\cite{Fruehwirth2023When}, we assume a potentially too large number of factors $\rmax$ and estimate the posterior distribution for
\SFS{$\nfactrue$} %
by counting the number of active columns in $\facloadmatrix$ for every MCMC draw.
Active columns of $\facloadmatrix$ are those that contain at least two nonzero elements, and zero columns are deemed inactive.
Columns with a single nonzero element are automatically transformed to zero columns during post-processing by moving the square of the single factor loading and adding \SFS{it} to the corresponding diagonal element of $\Sigmaeps$.
The reason is that these columns are actually
\SFS{spurious factors and} they capture the variance of a single observation series, as explained in~\cite{Fruehwirth2023When}.
Finally, one acquires a posterior sample for the covariance matrix by calculating $\Omega = \facloadmatrix\facloadmatrix^\top+\Sigmaeps$ for every joint draw of $(\facloadmatrix,\Sigmaeps)$.

\begin{figure}[t]
  \centering
	\includegraphics[width=\textwidth,page={1}]{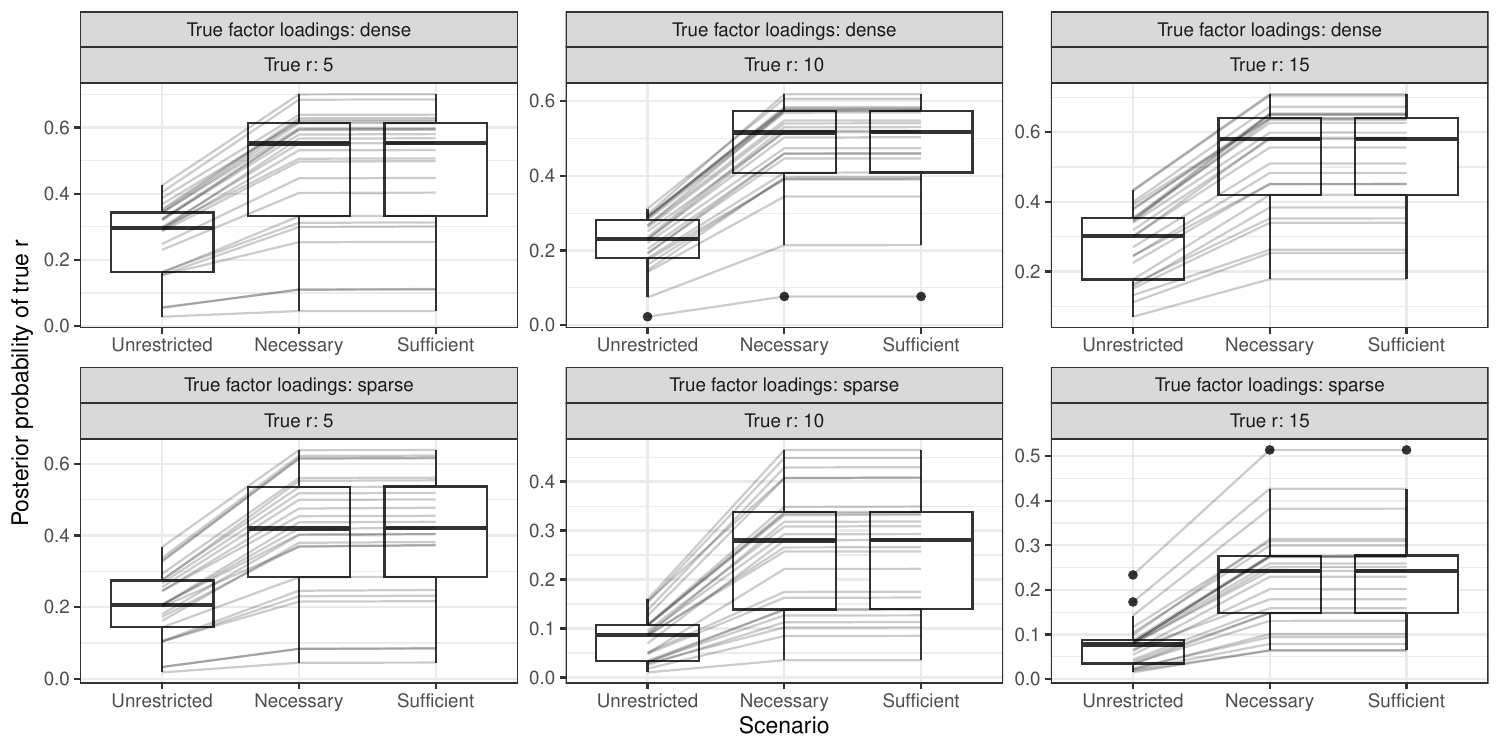} \\
  \caption{Simulation study.
  Posterior probabilities of the true number of factors are shown under the shrinkage prior \SFS{on $\tau_j$}.
  The top row corresponds to the dense setting, and the bottom row corresponds to the sparse setting.
  The columns correspond to the true number of factors in the DGP.
  All 25 repetitions are summarized in the boxplots, and the lines connect the same data set under different scenarios.}
	\label{fig:simres1}
\end{figure}

\begin{figure}[t]
  \centering
	\includegraphics[width=\textwidth,page={2}]{simstudy-plots_sticky.pdf}
  \caption{Simulation study.
  Root mean squared error of the estimated covariance matrix is shown under the shrinkage prior \SFS{on $\tau_j$}.
  The top row corresponds to the dense setting, and the bottom row corresponds to the sparse setting.
  The columns correspond to the true number of factors in the DGP.
  All 25 repetitions are summarized in the boxplots, and the lines connect the same data set under different scenarios.}
	\label{fig:simres2}
\end{figure}

\subsection{Simulation Study}

We follow~\cite{leg-etal:bay} and %
conduct a simulation study with three different combinations of $(\dimy,\nfactrue)$, namely, $(20,5)$, $(50,10)$, and $(100,15)$.
For each combination, $25$ repetitions of
\SFS{$T=100$} %
observations are generated.
Following~\cite{Fruehwirth2023Cusp}, we examine two settings for generating $\binarymatrix$: in the dense setting, $\binarymatrix$ is a fully nonzero \SFS{binary} matrix, and in the sparse setting, random 30\% of \SFS{the indicators in
$\binarymatrix$ are} set to zero and the remaining 70\% to one.
We always enforce the true $\binarymatrix$ to satisfy $\CRr$ by re-sampling until \SFS{this condition is met}.
In all scenarios, $\Sigmaeps$ is the identity matrix, and $\facload{ij}$ is standard normal \SFS{whenever} $\binarymatrix_{ij}$ is nonzero.

Turning to the priors, the fairly vague setting $(c_0, C_0)=(1, 0.3)$ is adopted from~\cite{leg-etal:bay}.
Finally, contrary to~\cite{Fruehwirth2023Cusp}, we do not estimate $\alpha$ to keep the model simple but rather fix $\alpha=5$, which is consistent with their findings.
Including the choice of shrinkage and uniform priors for $\tau_j$,
 300 posterior distributions are estimated in this simulation study in total.

To facilitate MCMC convergence diagnostics, four independent posterior Markov chains are simulated with distant i\-ni\-tia\-li\-za\-tions: zero, one, $\rmax-1$, and $\rmax$ randomly filled columns in $\facloadmatrix$ with standard normal draws.
In the small settings $(20,5)$ and $(50,10)$, the MCMC chains are run for $50\,000$ iterations, and the first $10\,000$ are discarded as burn-in.
However, we face significant computational challenges with our simple Gibbs sampler in the biggest setting $(100,15)$, where we run the MCMC chains for one million iterations on a cluster of 400 cores and one terrabyte memory for a total of 20 hours to see full convergence.

Figures~\ref{fig:simres1} and~\ref{fig:simres2} provide details on the results under the shrinkage prior on $\tau_j$ and follow a similar structure.
The six facets of Figure~\ref{fig:simres1} depict the posterior probability \SFS{$p(\nfactrue=\nfactrue_\text{true} \mid\yv)$} of the true number of factors, where $\yv$ are the observed data and \SFS{$\nfactrue_\text{true}$} %
is the true number of factors in the data generating process (DGP).
The first and the second rows correspond to the dense and, resp., the sparse setting, while the columns correspond to the true number of factors $\nfactrue_\text{true}$.
Within a facet, from left to right, the three boxplots summarize posterior probabilities under the unrestricted, the necessary, and, respectively, the sufficient scenario, each showing a distribution over 25 DGP repetitions.
The final ingredients of the chart are the lines that connect the corresponding repetitions, i.e., posterior summaries under different scenarios but the same data set.
The six facets of Figure~\ref{fig:simres2} depict the root mean squared error (RMSE) of the estimated covariance matrix and follow the same structure as the \SFS{six facets in Figure~\ref{fig:simres1}.}
We find that variance identification consistently reduces the estimated number of factors $\nfactrue$ without affecting the quality of the estimated covariance matrix.
In more than 50\% of the dense cases, the posterior probability of the true number of factors is below 0.5 under the unrestricted scenario but over 0.5 under both restricted scenarios, which can be seen as an important jump.
In the sparse setting, the posterior probabilities are generally lower, but the same pattern is observed.
We also find that the necessary and the sufficient scenarios yield very similar results, which we read as the necessary and sufficient conditions being almost equivalent for our DGP's.
In summary, we see that variance identification improves the estimate for the number of factors for all simulated data sets.

Results not reported here indicate the same conclusion under the uniform prior for \SFS{$\tau_j$}.
In particular, variance identification improves the estimate for the number of factors without affecting the quality of the estimated covariance matrix.
One difference is, however, that the posterior probabilities of the true number of factors are generally lower under the uniform prior than under the shrinkage prior.
While the probabilities range even up to 0.7 under the shrinkage prior, as Figure~\ref{fig:simres1} shows, the largest ones are already below 0.04 in the $(50,10)$ dense setting and below 0.001 in most of the $(100,15)$ sparse settings under the uniform prior.
The uniform prior \SFS{on $\tau_j$} does not provide as strong a signal for the correct number of factors as the shrinkage prior does, which is consistent with~\cite{Fruehwirth2023Cusp}.

\subsection{Prediction Exercise on Weekly Exchange Rate Data} \label{appEx}

Weekly returns of 17 currencies against the EUR are investigated between January, 2003, and December, 2005.
The series include the currencies of big trading partners of the Eurozone (Australian Dollar, Canadian Dollar, British Pound, Hong Kong Dollar, Japanese Yen, South Korean Won, New Zealand Dollar, Russian Ruble, Turkish Lira, and US Dollar), and important local partners (Swiss Franc, Czech Koruna, Danish Krone, Norwegian Krone, Polish Zloty, Romanian Leu, and Swedish Krona).
The chosen time period mostly avoids large international crises and heavy-tailed return distributions, as depicted in Figure~\ref{fig:data}, which renders the static latent factor model%
~\eqref{eq:model} appropriate for its analysis.

\begin{figure}[t]
	\centering
	\includegraphics[width=\linewidth]{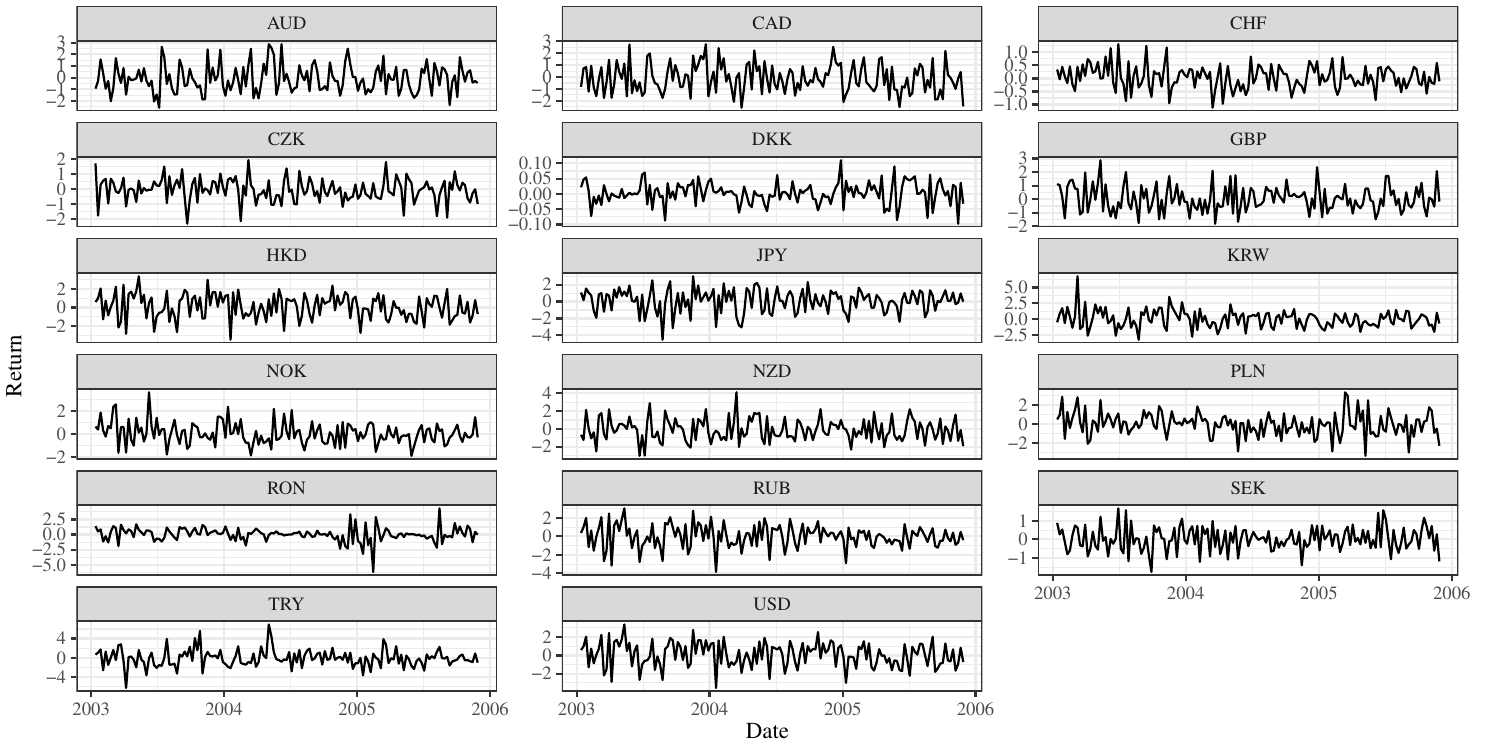}
	\caption{Real data exercise.
  Data set of 17 exchange rates against EUR.}
	\label{fig:data}
\end{figure}

Estimation is done \SFS{using a moving window of} %
52 weekly returns and the predictive performance is examined.
In particular, the log posterior predictive likelihood $\text{LPPL}=\log\int_\theta p(y_{53}\mid\theta,y_1,\ldots,y_{52}) p(\theta\mid y_1,\ldots,y_{52})\,d\theta$ of the next weekly return is estimated as the natural logarithm of the mean of the sampled posterior predictive likelihoods $\mathcal{S}=\{p(y_{53}\mid\theta,y_1,\ldots,y_{52})\}_{\theta\in\text{posterior}}$, i.e., $\log((\sum_{s\in\mathcal{S}}s)/|\mathcal{S}|)$, where $\theta$ collects all parameters of the model, and $y_t$, $t=1,\ldots,53$, are the weekly returns for a given time window, including the next weekly return $y_{53}$.
The sample means of the 52 weekly returns are subtracted from the input data before estimation and from the vector of next weekly returns before computing the LPPL.
Then, the time window is shifted by one week, and estimation and prediction are repeated.
The procedure is \SFS{repeated} %
100 times, which covers approximately two years of weekly predictions under a moving window regime.
$\rmax=8$ is chosen for this exercise, which is the largest %
$\rmax$ that satisfies $\rmax\le(\dimy-1)/2$ for $\dimy=17$, and the same priors as in the simulation study are used.
Importantly, we again consider two priors for
\SFS{$\tau_j$} (shrinkage and uniform), which results in 200 posterior distributions in total for this exercise.

During post-processing, the three scenarios
\SFS{regarding variance identification used  in the simulation study}
(unrestricted, necessary, and sufficient)
are applied to the MCMC output.
Two measures are computed for comparing the scenarios: the LPPL and the estimated number of factors.
If model $\mathcal{M}_1$ has the same LPPL as model $\mathcal{M}_2$ but fewer factors, then $\mathcal{M}_1$ is preferred for its simplicity.

The dots in Figure~\ref{fig:logliks} show the LPPL of the sufficient scenario for a moving window of width 52 relative to the LPPL of the baseline unrestricted scenario, both under the shrinkage prior on $\tau_j$.
The grey area represents the 5th to 95th percentiles of the posterior sample $\mathcal{S}$ used to estimate the LPPL under the unrestricted scenario, also relative to said baseline.
We see that the LPPL of the sufficient scenario is very close to the LPPL of the unrestricted scenario as the difference stays close to zero.
Moreover, the difference is considerably smaller than the width of the middle 90\% region of the sampling distribution of the LPPL under the unrestricted scenario.
Results not reported here show that both the uniform prior \SFS{on $\tau_j$} and the necessary scenario provide the same conclusion.
In summary, restricting the prior to \SFS{variance} identified patterns does not significantly affect predictive performance of the factor model.
Since this predictive measure purely depends on the estimated covariance matrix $\Omega$, this finding is consistent with the simulation study.

\begin{figure}[t]
	\centering
\includegraphics[width=\linewidth]{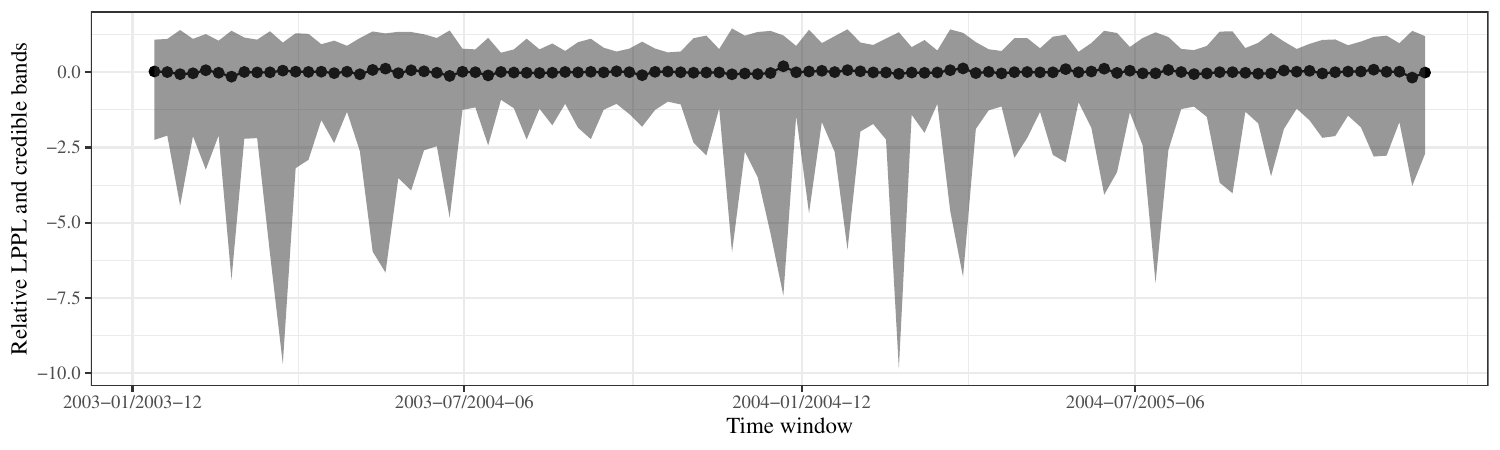}
\caption{Real data exercise.
  Log posterior predictive likelihood (LPPL) under the shrinkage prior on $\tau_j$ with sufficient variance identification minus LPPL without variance identification.
  Dots represent the difference in log posterior means and the gray intervals are the 5th to 95th percentiles credible regions of LPPL under the unrestricted prior.}
\label{fig:logliks}
\end{figure}

The top panel of Figure~\ref{fig:nfacs} displays
the shift in
the posterior distribution $p(\nfactrue\mid y_1,\ldots,y_{52})$ \SFS{across time} when switching from the unrestricted scenario to the sufficient scenario under the uniform prior \SFS{on $\tau_j$}.
The bottom panel shows the same for the shrinkage prior.
For instance, the blue triangle at the ``2003-07/2004-06'' label in the ``Uniform prior'' facet at $\nfactrue=4$ denotes approximately 0.15, which means that the posterior probability of $\nfactrue=4$ is 15 percentage points higher under the sufficient scenario than under the unrestricted scenario.
Probabilities of large $\nfactrue$ are generally reduced, and the probabilities of small $\nfactrue$ are increased.
We do not report results for the necessary scenario here, but the image is similar.
The sea of downward-pointing triangles \SFS{lies} %
above the sea of upward-pointing triangles in both panels, which indicates that the sufficient scenario consistently reduces the estimated number of factors $\nfactrue$ compared to the unrestricted scenario.

\begin{figure}[t]
  \centering
\includegraphics[width=\linewidth]{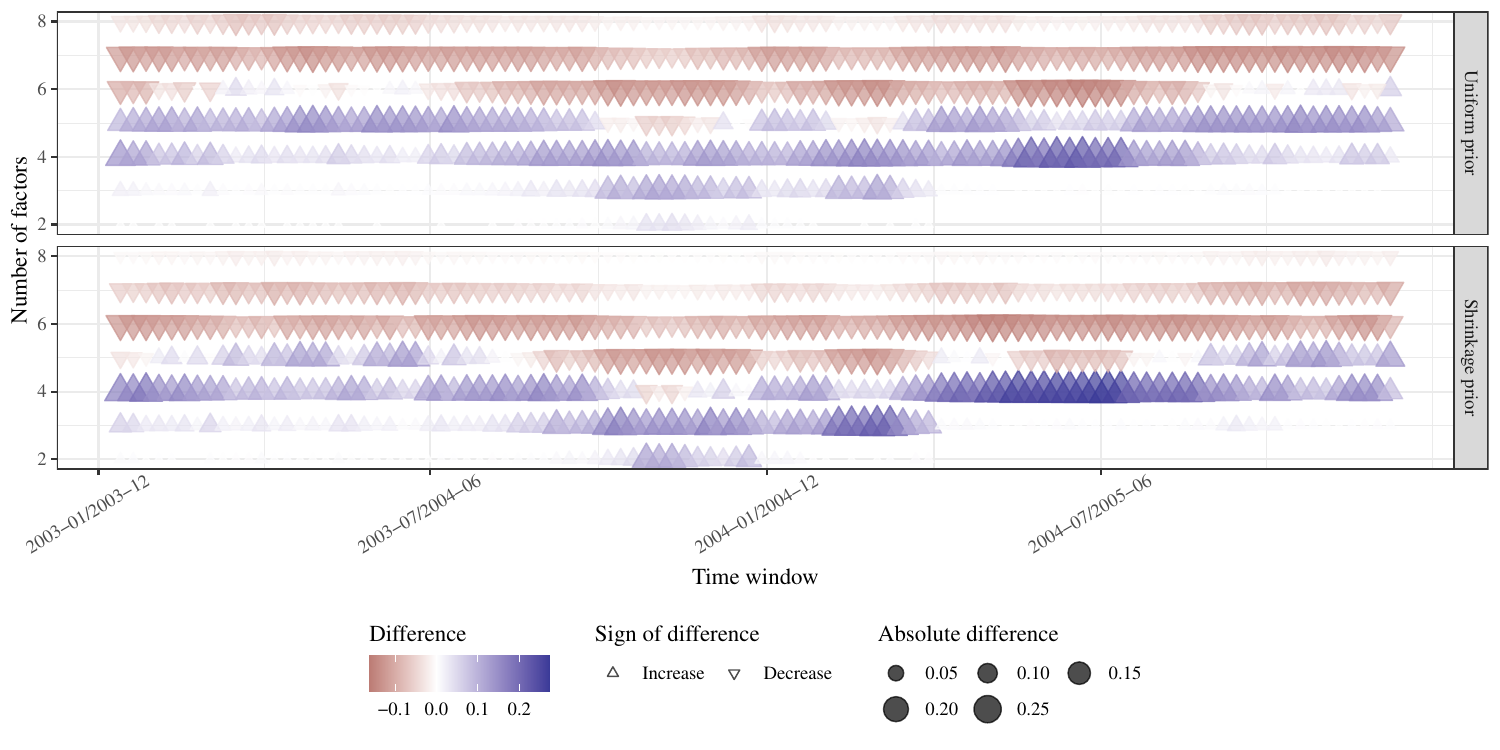}
\caption{Real data exercise.
  Shift in the posterior distributions of the number of factors when switching from the unrestricted scenario to the sufficient scenario.
  The posterior distributions run through time for the uniform prior (top) and the shrinkage prior (bottom).
  The size, color and orientation of the triangles represent the change in posterior probability.
  The triangles are transparent, and only every fourth period is shown for improved legibility.}
\label{fig:nfacs}
\end{figure}

In our experience, the share of variance identified matrices increases in the posterior sample with more shrinkage, and this is reflected in Figure~\ref{fig:probs}, which shows the posterior proportion of variance identified $\binarymatrix$ matrices under the two prior specifications.
The shrinkage prior prefers either close to empty or close to full columns a priori, separately for each column.
In contrast, the uniform prior produces close to half full columns a priori.
This spills over to the posterior distribution for this data set as can be seen from the proportions.
The counting rule $\CRr$ is more likely satisfied with more crowded columns, which results in slightly higher acceptance rates in all time windows.
Rates are mostly between 25\% and 45\%, and the difference between the two priors is consistent but not substantial.
Further investigations not reported here show that the necessary scenario results in a similar increase in the proportion of variance identified matrices as the sufficient scenario does.
Moreover, increasing shrinkage by decrasing $\alpha$ from five to three increases the distance between the two priors in the proportion of variance identified matrices, further supporting the conclusion that column shrinkage is beneficial for variance identification.

\begin{figure}[t]
\includegraphics[width=\linewidth]{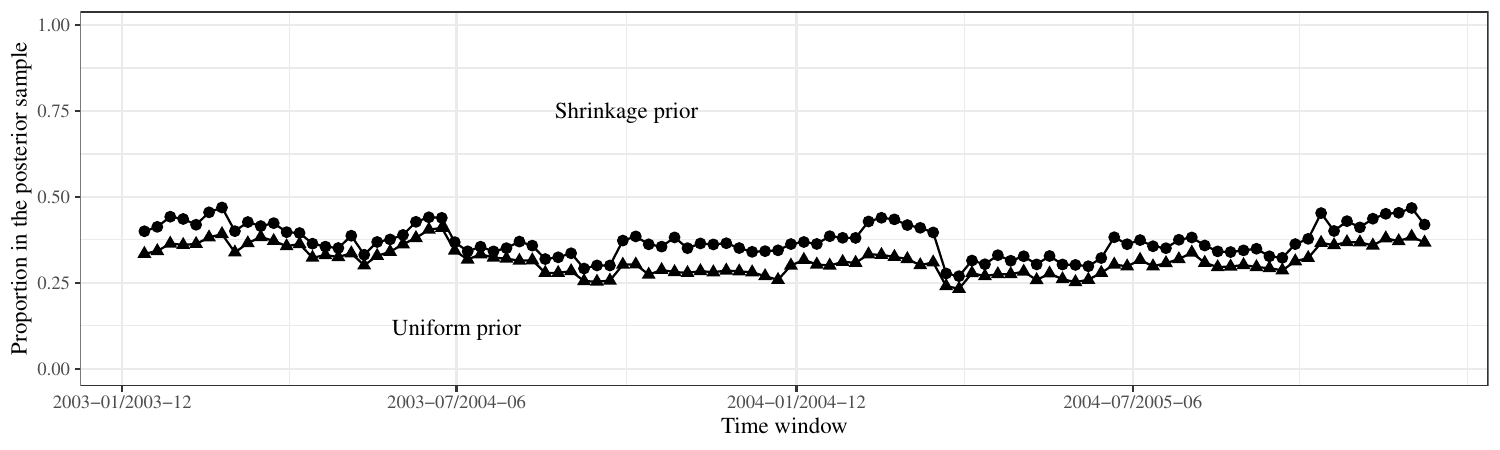}
\caption{Real data exercise.
  Fraction of variance identified draws in the posterior sample under the shrinkage and the uniform prior \SFS{on $\tau_j$}.}
\label{fig:probs}
\end{figure}

Overall, \SFS{both} the simulation study and the real world application consistently show that variance identification reduces the estimated number of factors without affecting the quality of the estimated covariance matrix.
One drawback is increased computational time for the same number of draws, as parts of the MCMC output are discarded, but the \SFS{ensuing} %
reduction in efficiency is \SFS{small} %
compared to the benefits of an improved estimator.

\section{Conclusion}\label{sec:conclusion}

\SFS{In this paper, we studied factor models which are a highly useful technique for dimension reduction in multivariate statistical analysis. To add to the mathematical understanding of these models, we focused  on variance identification to  uniquely identify
the variance decomposition in the factor representation of a covariance matrix. We proved
that a well-known counting rule based on the zero-nonzero pattern of the loading matrix is a sufficient condition for achieving  variance identification. The proof relied on  connecting factor analysis with some classical elements from graph and network theory which to our knowledge has not been exploited so far.}

\SFS{To enhance the relevance of this mathematical insight for practical factor analysis, we provide a computationally efficient algorithm for verifying the counting rule that again relies on results in graph and network theory.}  Our methodology is illustrated for simulated as well as real data in the context of post-processing posterior draws in Bayesian sparse factor analysis.
\SFS{As a main conclusion we find that certain inference tasks in factor analysis such as a predictive analysis are robust to whether
posterior draws are variance identified, while others inference tasks such as identifying number of factors may be hugely impacted
by the presence of unidentified posterior draws.}

\section*{Acknowledgments}

We thank the Editor, Associate Editor and referees.

\section*{Appendix\label{sec:appendix}}
\setcounter{subsection}{0}
\renewcommand{\thesubsection}{\Alph{subsection}}

\subsection{Example of a Variance Identified Model Without the Row Deletion Property}\label{sec:example}

The counting rule is not necessary.
The following sparse space is an example where the counting rule does not hold but the model is generically globally variance identified.
\begin{equation*}
  \facloadmatrix=\begin{pmatrix}
    \facload{11} & 0            & 0 \\
    \facload{21} & \facload{22} & 0 \\
    0            & \facload{32} & \facload{33} \\
    \facload{41} & 0            & \facload{43} \\
    0            & \facload{52} & 0 \\
    0            & 0            & \facload{63}
  \end{pmatrix}, \quad
  \Omega=
  \begin{pmatrix}
    v_1 & \cdot & \cdot & \cdot & \cdot & \cdot \\
    c_{21} & v_2 & \cdot & \cdot & \cdot & \cdot \\
    0 & c_{32} & v_3 & \cdot & \cdot & \cdot \\
    c_{41} & c_{42} & c_{43} & v_4 & \cdot & \cdot \\
    0 & c_{52} & c_{53} & 0 & v_5 & \cdot \\
    0 & 0 & c_{63} & c_{64} & 0 & v_6
  \end{pmatrix}.
\end{equation*}
To see this, observe that all factor loadings can generically be computed (up to sign switches in each column) from the lower triangular elements of $\Omega$, e.g., $\facload{11}=\sqrt{c_{21}c_{41}/c_{42}}$, and then all elements of $\Sigmaeps$ can generically be computed given the factor loadings and the diagonal of $\Omega$.

\subsection{MCMC Algorithm} \label{app:mcmc}

In this section, we provide details on the sampling algorithm that we employ for the numerical illustrations.
To keep the presentation concise, we denote by $j{:}k$ the sequence $(j, j+1, \ldots, k)$; if $j>k$, then $j{:}k$ is an empty sequence.
Furthermore, for any $n\times m$-dimensional matrix $X$, $X_{.,j{:}k}$ denotes the submatrix of $X$ consisting of its $j$th to $k$th columns.
Similarly, $X_{.,j}$ is just the $j$th column and $X_{j,.}$ is the $j$th row of $X$.
Finally, with a slight abuse of notation for the data vector $y$, $y_{i,t}$ denotes the $i$th element of the column vector $y_t$.

\begin{algorithm}
	\caption{MCMC Sampler for Sparse Factor Analysis}
	\label{alg:mcmc}
	\begin{algorithmic}[1]
		\Procedure{SampleSFA}{$\{y_t\}$, $\theta^{(0)}$, $M$, $a_0$, $b_0$, $c_0$, $C_0$}
		\For{$l$ \textbf{in} $\{1,\ldots,M\}$} \Comment{Main sampling loop}
		\For{$j$ \textbf{in} $\{1,\ldots,\rmax\}$} \Comment{Sample \SFS{$\delta_{:,j}$ and $\tau_j$} column-wise}
		\For{$i$ \textbf{in} $\{1,\ldots,m\}$} \Comment{Row-wise elements $\delta_{ij}$ %
  are sampled independently}
    \State Draw $\delta_{ij}^{(l)} \sim
    p(\delta_{ij}\mid\delta_{.,1{:}(j-1)}^{(l)},\delta_{.,(j+1){:}\rmax}^{(l-1)},\tau_j^{(l-1)},\{f_t^{(l-1)}\},\{y_t\})$
		\EndFor

		\State Draw $\tau_j^{(l)} \sim p(\tau_j\mid\delta_{.,j}^{(l)})$
		\EndFor

		\For{$i$ \textbf{in} $\{1,\ldots,m\}$} \Comment{Sample \SFS{$\beta_{i,.}$ and $\sigma^2_i$} row-wise}
		\State Draw $(\sigma_i^2)^{(l)} \sim p(\sigma_i^2\mid\delta_{i,.}^{(l)},\{f_t^{(l-1)}\},\{y_{i,t}\})$
		\State Draw $\beta_{i,.}^{(l)} \sim p(\beta_{i,.}\mid(\sigma_i^2)^{(l)},\delta_{i,.}^{(l)},\{f_t^{(l-1)}\},\{y_{i,t}\})$
		\EndFor

		\For{$t$ \textbf{in} $\{1,\ldots,T\}$} \Comment{Sample $f_t$ separately for each time point}
		\State Draw $f_t^{(l)} \sim p(f_t\mid\{(\sigma_i^2)^{(l)}\},\beta^{(l)},y_t)$
		\EndFor
		\EndFor
		\State \textbf{return} $\{\theta^{(l)}\}$ for $l=1,\ldots,M$
		\EndProcedure
	\end{algorithmic}
\end{algorithm}

\SFS{Algorithm~\ref{alg:mcmc} is a simplified version of the MCMC algorithm by~\cite{Fruehwirth2024Sparse}, adjusted to unrestricted loading matrices.
In particular,
the conditional posterior distributions
in lines (5),   (10-11), and (14) of Algorithm~\ref{alg:mcmc} are based on steps (Da),  (P), and (F), respectively, in their notation.
Line~(7) is a modified version of their step~(H),
taking into account that no restriction is imposed on $\beta$ to resolve rotational invariance, i.e.:
\begin{equation*}
\tau_j\mid\delta_{.,j}^{(l)} \sim
B(a_0 + d_j, b_0+\dimy - d_j), \quad \text{where}
\quad d_j=\sum_{i=1}^\dimy \delta_{ij}.
\end{equation*}
$M$ denotes the total number of MCMC draws.}

\bibliographystyle{myjmva}
\bibliography{exported_references}

\end{document}